\renewcommand{\normalsize}{%
  \fontsize{12pt}{14pt}\selectfont  %
  \abovedisplayskip      7\p@ \@plus 2\p@ \@minus 5\p@
  \abovedisplayshortskip \z@ \@plus 3\p@
  \belowdisplayskip      \abovedisplayskip
  \belowdisplayshortskip 4\p@ \@plus 3\p@ \@minus 3\p@
}
  \title{Bayesian Repulsive Mixture Modeling with \matern Point Processes}
  \author{Hanxi Sun \And Boqian Zhang \\
  \And Minhyeok Kim \And Vinayak Rao\thanks{varao@purdue.edu (corresponding author)} %
    \\ \\ \hspace{-5in}
    Department of Statistics, Purdue University \\
    \hspace{-5in} West Lafayette, IN 47907, USA}
    \date{}
\begin{document}

  \maketitle

  \begin{abstract}

Mixture models are a standard tool in statistical analyses, widely used for density modeling and model-based clustering. 
In this work, we propose a Bayesian mixture model with repulsion between mixture components.
Such repulsion helps address the problem of overlapping or poorly separated clusters, and assists with model interpretibility and robustness. 
Our modeling approach introduces repulsion via a generalized \matern type-III repulsive point process model, and proceeds by applying a dependent sequential thinning scheme to a latent Poisson point process. 
A key feature of our model is that in contrast to most existing approaches to modeling repulsion, efficient posterior inference is possible via a  Gibbs sampler, one that exploits the latent Poisson of our problem. This novel sampler also allows posterior inference over the number of clusters,  and is of independent interest even in standard clustering applications without repulsion. We demonstrate the utility of the proposed method on a number of synthetic and real-world problems.

\end{abstract}

\noindent%
{\it Keywords:} Clustering, Data Augmentation, Parsimony, Poisson process, Thinning

\section{Introduction} \label{sec:introduction}

Recent advances in statistical and machine learning have placed 
a growing emphasis on  balancing statistical fidelity and predictive accuracy with interpretability, parsimony and fairness.  
In this paper, we focus on interpretability and diversity in mixture modeling applications, through the use of {\em repulsive priors}.
Mixture models are useful both in density modeling applications as well as in clustering applications~\citep{mclachlan1988mixture, banfield1993model, bensmail1997inference}, %
with goals for the latter including data exploration, visualization and summarization. 
For computational tractability, the parameters of the mixture components are typically modeled as independent and identically distributed draws from some base-distribution. 
However, unless the clusters are widely separated, this can result in multiple overlapping clusters, %
leading to redundancy, and lack of interpretability.  %
Furthermore, since mixture models are typically composed of simple parametric components, even if the data exhibits clear clustered structure, slight deviations of individual clusters from the parametric form will again result in overlapping and inconsistent number of components~\citep{beraha25}.

A recent and popular approach addresses this problem by jointly sampling all component parameters from a \emph{repulsive prior} that %
penalizes configurations with components situated too close to each other. 
Such priors typically draw from the point process literature, examples including Gibbs point processes~\citep{stoyan1987stochastic} and determinantal point processes~\citep{hough2006determinantal, lavancier2015determinantal}.
Mixture models with repulsion have been shown to provide simpler, clearer and more interpretable results, often without too much loss of predictive performance~\citep{petralia2012repulMix, xu2016bayesian, bianchini2018determinantal, beraha25}.  
Nevertheless, they present computational challenges, often involving intractable normalization constants or reversible-jump algorithms.

In this work, we propose a new, flexible class of repulsive priors based on the \matern type-III point process~\citet{matern1960, matern1986}. %
An advantage of these is the ability to flexibly introduce new, mechanistic repulsive mechanisms, as shown  recently in~\citet{rao2017matern}. That work also developed an efficient Markov chain Monte Carlo (MCMC) algorithm for posterior sampling. %
We bring this process to the setting of mixture models, using them as a repulsive prior over the number of components and their locations.
Treating the \matern realization as a latent, rather than a fully observed point process raises computational challenges that the algorithm from~\citet{rao2017matern} does not handle. 
We develop an efficient MCMC sampler for our model and demonstrate the practicality and flexibility of our proposed repulsive mixture model on a variety of datasets.
Our sampler is also useful to sample the number of components %
in mixture models without repulsion, as an alternative to often hard-to-tune reversible jump MCMC methods~\citep{richardson1997bayesian}. 

We organize this paper as follows. \Cref{sec:background} reviews the generalized \matern type-III point process, while \Cref{sec:model} and \Cref{sec:method} outline our proposed {\em \matern Repulsive Mixture Model} (MRMM) and our novel MCMC algorithm. \Cref{sec:related-work} discussed related work on repulsive mixture models, and we apply our model to a number of datasets in~\Cref{sec:experiment}.

\section{\matern repulsive point processes} \label{sec:background}

The Poisson process~\citep{kingman1992poisson} is a {\em completely random} point process, where events in disjoint sets are independent of each other.
To incorporate repulsion between events,~\citet{matern1960, matern1986} introduced three spatial point process models that build on the Poisson process.
The three models, called the \matern hardcore point process of type I, II and III, only allow point process realizations with pairs of events separated by at least some fixed distance $\eta$, where $\eta$ is a parameter of the model. 
The three models are constructed by applying different thinning or event-deletion schemes on a {primary} homogeneous Poisson point process.
Despite being theoretically more challenging than the other two processes, the type-III process has the most natural thinning mechanism, and supports higher densities of points.
~\citet{rao2017matern} showed how this can easily be generalized to include probabilistic thinning and spatial inhomogeneity. 
Furthermore,~\citet{rao2017matern} showed that posterior inference for a completely observed type-III process can be carried out in a relatively straightforward manner.
For these reasons, we will focus on the generalized \matern type-III process, and for simplicity, will refer to this simply as the \matern process in the rest of this paper. 

Formally, the \matern process is a finite point process defined on a space $\Location$, parameterized by a thinning kernel $\Kernel_\thin:\Location\times\Location\to[0, 1]$ and a nonnegative intensity function $\Rate_\Theta:\Location \to [0,\infty)$. 
We decompose $\Rate_\Theta(\theta)$ as $\Rate_\Theta(\theta)=\mRate\cdot p_\Theta(\theta)$, for a finite {normalizing}
constant $\mRate > 0 $ and some probability density $p_\Theta(\theta)$ on $\Theta$.
Simulating this process proceeds in four steps:
\begin{enumerate}
    \item Simulate the primary Poisson process $\ssF=\cbr{\s_1,\dotsc,\s_{\card{\ssF}}} \subset \Theta$ %
    with intensity $\Rate_\Theta(\cdot)$. 
\item Assign each event $\s_j$ in $\ssF$ an independent {\em birth-time} uniformly on $\Time=[0, 1]$. 
\item Sequentially visit events in $\ssF$ according to their birth-times from the oldest to the youngest and attempt to thin/delete them.  
At step $j$, the $j$th oldest event $(\theta,t)$ is thinned by each surviving older primary event $(\theta', t'), t' < t$ with probability $\Kernel_\thin(\theta,\theta')$. 
\item Write $\ss$ and $\ssGt$ for the elements of $\ssF$ that survive and are thinned from the previous step, respectively. The set $\ss$ forms the \matern process realization.
\end{enumerate}

Different choices of the thinning kernel $\kernel\s{\s_j}$ give different variants of the \matern process. For a hardcore \matern process (\Cref{fig:matern}), $\kernel\s{\s_j} = \ind{\|\s-\s_j\| < \thin}$,  %
so that thinning is deterministic: all newer events within radius $\thin$ of a previously survived event are thinned.
Other approaches are probabilistic thinning~\citep{rao2017matern}, where $\kernel\s{\s_j} = \thin_p\ind{\|\s-\s_j\| < \thin_R}$ (with $\thin_p \in [0, 1]$), or the smoother squared-exponential thinning, where
$\kernel\s{\s_j} = \exp(-\frac{\|\s-\s_j\|^2}{2\thin}) $. 
~\citet{huber2009likelihood} propose soft-core thinning, where each event $\s_j$ has its own thinning radius $\thin_j$ drawn from some distribution, and  $\kernel\s{\s_j} = \ind{\|\s-\s_j\| < \thin_j}$.
\begin{figure}
\centering
	\includegraphics[width=0.8\linewidth]{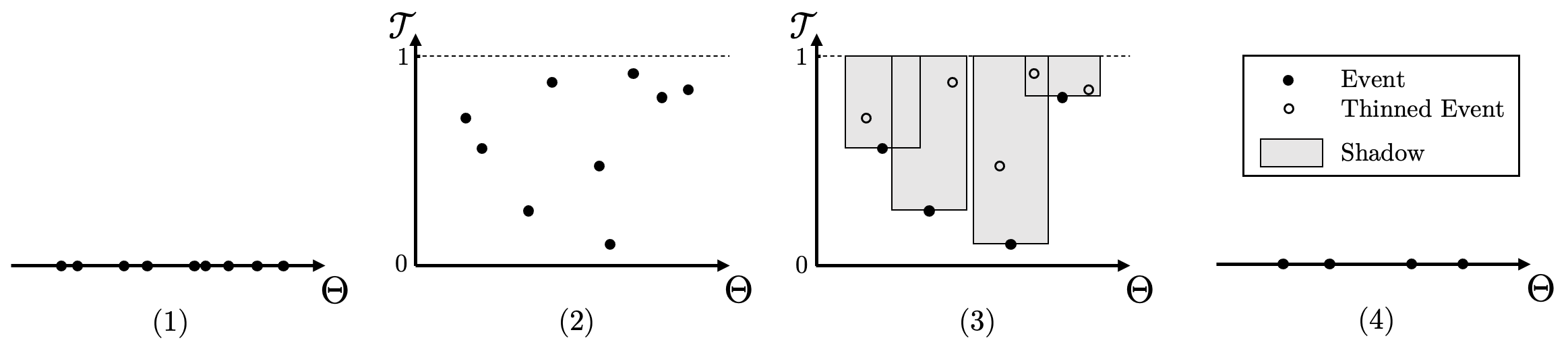}
	\caption{
		The %
        generative process of a one-dimensional hardcore \matern process. %
        }
	\label{fig:matern}
\end{figure}

Observe that %
the set $\cbr{(\s_1,t_1),\dotsc,(\s_{\card{\ssF}}, t_{\card{\ssF}})}$ itself forms a Poisson process on $\Location \times \Time$, with intensity $\rate{\s,t} = \Rate_\Theta(\s)\ind{[0,1]}(t)$. 
We write $\F_{\Location \times \Time}$ for this extended process, and $\ttF = \proj{\Time}{(\F_{\Location \times \Time})}$ for the set of birth-times.
We use $\G_{\Location \times \Time}$ for the extended \matern events, $\tt$ for the associated birth-times, and $\Gt_{\Location \times \Time}$ and $\ttGt$ for their thinned counterparts.

Following~\citet{rao2017matern}, we define %
a {\em shadow function} $\Shadow_\thin: \Location\times\Time\to[0, 1]$. %
This gives the probability that an event $(\s^*, \t^*)\in\Location\times\Time$ is thinned by a collection of events $\F_{\Location\times\Time}$ as 
\begin{align}
	\shadow{(\s^*,\t^*)}{\F_{\Location\times\Time}}       & =  1 - \prod_{(\s,\t)\in\F_{\Location\times\Time}}\left({1-  \ind{(\t,1]}(\t^*) \kernel{\s^*}{\s}}\right).
\end{align}
Above, the $\ind{(t,1]}(t^*)$ term reflects that an event $(\s^*,\t^*)$ can only be thinned by earlier events. 
We write $\maternThin{\F_{\Location\times\Time}, \thin}$ for the sequential thinning process that assigns elements of $\F_{\Location\times\Time}$ to one of $\G_{\Location\times\Time}$ or $\Gt_{\Location\times\Time}$ according to kernel $\Kernel_\thin$, and  $\proj{\Theta}(G_{\Location\times\Time})$ for the projection of events in $G_{\Location\times\Time}$ onto $\Theta$.
The generative process of $\ss\sim\maternP{\Rate, \thin}$ is then %
\begin{eqnarray*}
  \given{\F_{\Location\times\Time}}\Rate & \sim & \poisP{\rate{\cdot,\cdot}}, \\    \given{\G_{\Location\times\Time}, \Gt_{\Location\times\Time}}{\F_{\Location\times\Time}, \Kernel_\thin} & \sim & \maternThin{\F_{\Location\times\Time}, \thin}, \quad %
    \ss  =  \proj{\Theta}(\G_{\Location\times\Time}).
\end{eqnarray*}

\section{\matern repulsive mixture model (MRMM)}\label{sec:model}

To extend the above to a repulsive mixture model, %
we treat $\Location$ as a parameter-space, and introduce weight-space $\Weight = [0,\infty)$. For some density $p_\Weight(\cdot)$ on $\Weight$, we now consider 
 a primary process $F$ on $\Location\times\Weight\times\Time$ with $\Time=[0,1], \Weight=[0,\infty)$. Unlike before, we model $\F$ as a Poisson process conditioned to have at least one event, with intensity function equal to 
\begin{align}
	\label{eqn:rate}
	\rate{\s, \w, \t} = \mRate\cdot\ps\s\cdot \pw\w\cdot \ind{[0,1]}(t).
\end{align}
We set $\pw\w = \dGamma{\w\; \aw, 1}$, with $\ps\s$ a problem-specific prior over component parameters.
Given $F$, we produce a \matern realization $\G=\cbr{(\s_1,\w_1,\t_1),\dotsc,(\s_{|G|},\w_{|G|},\t_{|G|})}$ by applying $\maternThin{F,\thin}$ for some kernel $\Kernel$ on $\Theta$ with parameter $\thin$.
Each element $(\s,\w,\t) \in G$ will form a component of a mixture model, with  $\s$ and $\w$ representing the parameter and unnormalized weight of that component. %
Our model thus serves as a prior over both the number of components in a mixture model, as well as the component weights and locations.
Since $F$ is defined to have at least 1 event, and since events in $F$ can only be thinned by surviving events, the resulting mixture model will have at least one component.

For a set $A$, write $\sum A$ for the sum of its elements. %
It is well known~\citep{devroye1986non} that 
after normalizing the Gamma-distributed $\ww$, the mixture probability $\frac{\G_\Weight}{\sum G_{\Weight}} := \{\frac{\w_1}{\sum {G_{\Weight}}} ,\dotsc,\frac{\w_1}{\sum {G_{\Weight}}}\}$ follows a symmetric Dirichlet$(\alpha)$ distribution.
Let $\px\cdot{\s}$ be some family of probability densities  parameterized by $\s\in\Location$; this will correspond to the mixture components. Then given $G$, we model the observed data $\XX = \cbr{x_i, i=1, \dots, n}$ as follows:
\begin{align}
	\label{eqn:mixture}
	\given{x_i}{\G} 
     \overset{\mathrm{iid}}{\sim} \sum_{(\s,\w,\t) \in G} \frac{\w}{\sum {G_{\Weight}}}\px\cdot{\s}, \quad\quad  i=1,\dotsc, n.
\end{align}

As an example, if the observations lie on a Euclidean space, $\px\cdot{\s}$ could be a normal distribution, with $\s$ representing the location and variance of a component in a Gaussian mixture model. 
In this case, the density $\ps\s$ might be a Normal-Inverse-Wishart distribution.

If the thinning kernel $\Kernel_\thin$ equals $0$, our model reduces to a standard mixture model, with i.i.d.\ component parameters, Dirichlet-distributed component weights, and a conditional Poisson distribution on the number of components.
Different settings of $\Kernel_\thin$, whether hardcore, probabilistic or squared-exponential thinning, allow different kinds of repulsion between the component parameters. 
In this work, we only place repulsion between the component parameters $\theta$ and not the component weights $w$.
Further, in many settings we allow $\Kernel_\thin$ to only depend on a subset of the components of $\theta$. 
For instance, writing $\s = (\s^\mu, \s^\sigma)$ where $\s^\mu$ is the component location and $\s^\sigma$ is the component variance, it is common to enforce repulsion only between the component locations, but not their variances.
This can easily be achieved by setting $\Kernel_\thin$ to depend only on $\s^\mu$.

To complete the Bayesian model, we specify hyperpriors on $\mRate$ and $\thin$, as well as on any hyperparameters of $p_\Theta(\s)$. 
The last is problem-specific, and is no different from models without repulsion. 
A natural prior for $\mRate$ is the Gamma distribution. %
For the hardcore process, where $\thin$ is the thinning radius, or for the squared-exponential thinning kernel, where $\thin$ is the lengthscale parameter, we can use a Gamma hyperprior. 
For probabilistic thinning, where $\thin=(R, p)$, we can use a Beta prior on the thinning probability $p$, and a Gamma prior on the thinning radius $R$. 
We include further discussion of the parameters of the thinning kernel in %
\Cref{sec:experiment} and the supplementary material. 

Write $\zz=\rbr{\z_1, \dotsc, \z_n}$ for the cluster assignments of the data in \cref{eqn:mixture}, with $\z_i\in\cbr{1, \dots, |G|}$. %
With hyperpriors omitted for simplicity, the generative process of MRMM is %
\begin{align}
	\label{eqn:mrmm}
		\given{\F}{\Rate} & 
			\sim \poisP{\rate\cdot}\big|{\card{F} > 0}, \ \ \quad \qquad
		\given{\G, \Gt}{\F, \Kernel_\thin}  
			\sim \maternThin{\F, \thin}, \\
		\given{\z_i}{G} & 
         \overset{\mathrm{iid}}{\sim} \multinomial{\frac{\G_\Weight}{\sum\G_\Weight}},\quad
        \given{x_i}{\z_i, G}  
        \sim \px\cdot{\s_{\z_i}}, \qquad  i = 1,\dotsc, n. \nonumber
\end{align}
Write $\mathcal{M} = \{\textsc{thin},\textsc{no-thin}\}$ for a two point `mark' space. The proposition below gives the joint density of all variables, and is useful for deriving our posterior sampling algorithm.
\begin{theorem}
	\label{prop:X-G-Gt}
    Write $\scP_\lambda$ for the law of a rate-$\lambda(\cdot)$ Poisson process on $\Space\times \mathcal{M}$. Then the measure of the tuple $\XX$, $\G$, $\Gt$ has density with respect to ${d}x^n\times\scP_\lambda$ given by
	\begin{align}
      \pgiven{\XX, \G, \Gt}{\Rate, \thin} & =
      \frac{\ind{}(|\G \cup \Gt| > 0)}{1-e^{\int_{\Space}- \lambda(\s, \w, \t)\dif\s \dif\w \dif\t }} \nonumber \\
    & \quad  \prod_{\g\in\G}\sbr{1-\shadow\g\G}\prod_{\gt\in\Gt}\shadow\gt\G  
		 \prod_{i=1}^{n}\sum_{(\s, \w, \t)\in\G}{\frac\w{\sum\G_\Weight}\px{x_i}{\s} }.
	\end{align}	
\end{theorem}

\section{Posterior inference for MRMM}\label{sec:method}

Given a dataset $\XX=\{x_1,\dotsc,x_n\}$ modeled with MRMM, the posterior distribution $\pgiven{G, \zz, \mRate, \thin}{\XX}$ summarizes information about the component weights and locations (through $G$), and the cluster assignments (through $\zz$). 
We construct a Markov chain Monte Carlo (MCMC) sampler to simulate from this.
Our sampler also imputes the thinned events $\Gt$,
and proceeds by sequentially updating $\mRate, \thin, \G, \Gt$ and $\zz$ according to their conditional posterior distributions. 
Given the pair $(\G,\Gt)$, updating the remaining variables is fairly straightforward, and we show how the latent Poisson structure makes updating these variables relatively easy too.
Below, we present full details of the Gibbs steps.

\noindent\textbf{1) Updating thinned events $\Gt$:}
From the data generation process, it follows that given $\G$,  $\Gt$ is independent of $\zz$ and $\XX$.
Furthermore, for \matern type-III processes, events in $\Gt$ can only be thinned by events in $\G$, suggesting that given $\G,\mRate,\thin$, the events in $\Gt$ do not interact with each other, and form a Poisson process.
The result below formalizes this:
\begin{prop}
\label{prop:Gt}
Given all other variables, the conditional distribution of the thinned events $\Gt$ is a Poisson process with intensity
${\rate\cdot \shadow\cdot{G}}$.
\end{prop}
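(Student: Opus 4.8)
The plan is to read the conditional law of $\Gt$ directly off the joint density already established in \cref{prop:X-G-Gt}. Since that proposition gives the full density of $(\XX, \G, \Gt)$ with respect to $\scP_\lambda\times\dif x^n$, the conditional $p(\Gt\vbar\mRate,\thin,\G,\zz,\XX)$ is obtained, up to a normalizing constant, by fixing $\XX,\G,\mRate,\thin$ and retaining only those factors that depend on $\Gt$. The final step is then to recognize the surviving expression as the Radon--Nikodym derivative, with respect to $\scP_\lambda$, of a Poisson process with the claimed intensity.

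First I would discard the factors that are constant in $\Gt$. The likelihood factor $\prod_{i=1}^{n}\sum_{(\s,\w,\t)\in\G}\frac{\w}{\sum\G_\Weight}\px{x_i}{\s}$ depends on the data only through $\G$; this contributes no $\Gt$-dependence and at the same time confirms the stated conditional independence $p(\Gt\vbar\mRate,\thin,\G,\zz,\XX)=p(\Gt\vbar\mRate,\thin,\G)$. The survivor product $\prod_{\g\in\G}\sbr{1-\shadow\g\G}$ and the normalizer $\big(1-e^{-\int_\Space\lambda}\big)^{-1}$ depend only on $\G$ and $\lambda$. Finally, because in a type-III process the oldest primary event can never be thinned, the conditioning $\card{\F}>0$ forces $\card{\G}\ge 1$, so that $\ind{}(\card{\G\cup\Gt}>0)=1$ for every $\Gt$ and the indicator is inert as well. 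The only $\Gt$-dependent factor that remains is $\prod_{\gt\in\Gt}\shadow\gt\G$.

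Next I would identify this remaining factor. Writing $y=(\s,\w,\t)\in\Space$ for a generic point, the density with respect to $\scP_\lambda$ of a Poisson process with intensity $\rate\cdot\shadow\cdot\G$ is $\exp\!\big(\int_\Space \rate{y}\sbr{1-\shadow{y}\G}\dif y\big)\prod_{\gt\in\Gt}\shadow\gt\G$. The exponential prefactor does not depend on the configuration $\Gt$ and is therefore absorbed into the normalizing constant, so matching against $\prod_{\gt\in\Gt}\shadow\gt\G$ yields exactly $\Gt\vbar\mRate,\thin,\G\sim\poisP{\rate\cdot\shadow\cdot\G}$. Integrability of the intensity, needed for this process to be well defined, is immediate: since $\shadow{y}\G\in[0,1]$ we have $\int_\Space \rate{y}\shadow{y}\G\,\dif y\le\int_\Space\rate{y}\,\dif y<\infty$.

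The main obstacle is not the algebra but justifying that the two pieces tied to the $\card{\F}>0$ conditioning (the indicator and the normalizer) are genuinely inert in $\Gt$, and that the thinnings of distinct primary events are conditionally independent given $\G$ rather than interacting. Both facts rest on the defining property of the type-III mechanism: an event may be thinned only by \emph{surviving} events, so the shadow $\shadow\cdot\G$ is a function of $\G$ alone and never references the other thinned points. This is precisely what collapses the dependent sequential thinning into an independent, inhomogeneous thinning once $\G$ is fixed. A cleaner self-contained alternative, sidestepping the joint density, is to argue directly from the generative scheme via Campbell's (Mecke's) theorem: for a test function $f$ one computes $\mathbb{E}\big[\sum_{\gt\in\Gt}f(\gt)\,\vbar\,\G\big]$ as an integral over the primary Poisson process weighted by the probability $\shadow\cdot\G$ that a point is thinned into $\Gt$, which reads off the intensity measure $\rate\cdot\shadow\cdot\G$ directly while handling the conditioning transparently.
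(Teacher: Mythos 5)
Your proposal is correct and follows essentially the same route as the paper: read the conditional off the joint density of \cref{prop:X-G-Gt}, drop the $\Gt$-independent factors (including the indicator, which is inert since $|\G|>0$), and identify the surviving product $\prod_{\gt\in\Gt}\shadow\gt\G$ as the density of a rate-$\rate\cdot\shadow\cdot\G$ Poisson process with respect to $\scP_\lambda$. The density formula you write out explicitly is exactly the content of \cref{lem:poiss_density}, which the paper invokes at the same point.
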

\noindent This result follows \citet{rao2017matern}, though in this work, we are conditioning on $\G \cup \Gt$ having at least one event. Our proof, included in the supplement, is also simpler and cleaner, exploiting~\Cref{prop:X-G-Gt} and working with densities with respect to the rate-$\lambda$ Poisson measure. 
Since ${\shadow\cdot{G}} \le 1$, we can easily use Poisson thinning~\citep{lewis1979simulation} to simulate this Poisson process: simulate a Poisson process with intensity $\Rate(\cdot)$ on the whole space $\Space$, and then keep each event $\gt$ in it with probability $\shadow\gt\G$. 
This makes jointly updating the entire set $\Gt$ easy and efficient, without any tuning parameters.

\noindent\textbf{2) Updating the \matern events $G$:}
This step is more challenging, since %
the \matern events interact with each other, and with the clustering structure of the data.
Instead of trying to independently update the entire $G$, we do so one component at a time. %

We first %
discard the cluster assignments $\zz$, these are easily resampled in step 3 below. 
We then make a pass through the elements of $\G \cup \Gt$, using~\Cref{prop:X-G-Gt} to reassign each to either $\G$ or $\Gt$. %
At the end of this, we have an updated pair $(\G^*,\Gt^*)$. %
While the union $G \cup \Gt$ is unchanged, our ability to efficiently update $\Gt$ in the previous step suggests fast mixing.

In our experiments however, we sometimes observed poor mixing, especially with hardcore thinning.
The deterministic thinning of this  process forbids elements of $\G^*$ from lying within each others' shadow, and also requires $\Gt^*$ to lie in the shadow of $\G^*$, making it hard to switch an event from the \matern set to thinned set, or vice versa. 
In settings where $\Gt$ has few events, this chain will mix poorly, and when there is no repulsion (so that $|\Gt|=0$), this Markov chain is no longer ergodic.
To address this, at the start of this step, we augment our MCMC state-space with an independent rate-$\au\rate\cdot$ Poisson process $\Ft\subset\Space$: %
\begin{align}
	\given{\Ft}{\au,\Rate}\sim\poisP{\au\rate\cdot}.
\end{align}
We call $\au>0$ the augmentation factor, which forms a parameter of our MCMC algorithm.
Having simulated $\Ft$, we cycle through the elements of $\G \cup \Gt \cup \Ft$, sequentially relabeling each event as `survived', `thinned' or `augmented' to produce a new triplet $\G^* \cup \Gt^* \cup \Ft^*$. 
This relabeling is carried out to preserve the joint conditional of $\G^*, \Gt^*, \Ft^*$, and after discarding $\Ft^*$, we have updated $(\G,\Gt)$ while maintaining their conditional distribution.

Since $\Ft$ is independent of everything else, it more easily allows events to be introduced into, and removed from $\G$. %
Each relabeling step is straightforward, and requires computing a three-component probability.
For each $e \in \G \cup \Gt \cup \Ft$, write $\Ge, \Gte$ and $\Fte$ for the sets resulting from removing $e$ (only one of these will change). 
Write $\Se$ for the sum of the weights after removing $e$: $\Se = \sum {\proj{\Weight}(\Ge)}$.
For any  $x_i \in \XX$ and event $g = (\theta,w,t) \in G$, write $l^g_{i} = \w\px{x_i}{\s}$, and $\Le_i = \sum_{g \in \Ge}l^g_i$: %
this is the unnormalized likelihood of observation $i$ with event $e$ taken out, and with its cluster assignment marginalized out. 
Then, following \Cref{prop:X-G-Gt}, %
the probabilities of  ``survived'', ``thinned'' or ``augmented'' are
\begin{align}
	\label{eqn:relabel-posts}
    P(e \in G| -)  & \propto  
    	\prod_{i=1}^n \frac{l^e_i\emp{+\Le_i}}{\Se\emp{+\proj{\Weight}(e)}}  
		\prod_{\g\in\Ge\emp{\cup\{e\}}}\sbr{1-\shadowA\g{\Ge\emp{\cup\{e\}}}}  \prod_{\gt\in\Gt}\shadowA\gt{\Ge\emp{\cup\{e\}}}, \nonumber \\
    P(e \in \Gt| -) & \propto 
    	\prod_{i=1}^n \frac{\Le_i}{\Se}  
    	\prod_{\g\in\Ge}\sbr{1-\shadowA\g{\Ge}} 
    	\prod_{\gt\in\Gte\emp{\cup\{e\}}}\shadowA\gt\Ge, \\
    P(e \in \Ft| -) & \propto \emp{\au}
    	\prod_{i=1}^n \frac{\Le_i}{\Se} 
    	\prod_{\g\in\Ge}\sbr{1-\shadowA\g{\Ge}}
    	\prod_{\gt\in\Gte}\shadowA\gt\Ge. \nonumber 
\end{align}
Having cycled through all elements of $\G \cup \Gt \cup \Ft$, we have a new partition $(\G^*,\Gt^*,\Ft^*)$, after which the augmented Poisson events $\Ft^*$ are discarded. %
The augmented factor $\au$ in this procedure governs the cardinality of augmented events $\Ft$. 
A larger $\au$ results in faster mixing, but higher computational cost.
Our experiments in the supplementary material suggest that a moderate augmentation factor of 5 adequately balances mixing and computation. 

\noindent\textbf{3) Updating cluster assignments $z$ and component weights $\ww$:} 
Given $\XX$ and mixture parameters $\ss$ and $\ww$, we can easily resample the assignments $\zz$ that were discarded at the start of the previous step. 
This is no different from standard mixture models; for observation $i$:
$
  p(z_i = g | - ) \propto l^g_i, \ \forall g \in G. 
$
Clusters assignments for all observations are conditionally independent, so that these assignments can be carried out in parallel.

In light of the first two update steps, updating the weights $\ww$ is not strictly necessary, nevertheless it is very straightforward and improves mixing.
Given cluster assignments $\zz$ and the number of mixture components $\card{\G}$, the mixture weights $\ww=\cbr{\w_j, j=1,\dots,|\G|}$ are independent of the other variables.
A priori, the $\w_j$'s are independent $\dGamma{\aw, 1}$ random variables, or equivalently, are obtained by multiplying a sample from a $\dirichlet(\alpha,\dotsc,\alpha)$ distribution (the normalized weights) with an independent sample from a $\dGamma{|G|\alpha, 1}$ distribution (the sum of the weights)~\citep{devroye1986non}.
We work with the latter representation, and seek to simulate from the posterior distribution of the normalized weights and the sum of the weights.
It is easy to see that these continue to be independent under the posterior.
The sum of the weights plays no role in the likelihood, and continues to follow a $\dGamma{|G|\alpha,1}$ distribution, while the Dirichlet-multinomial conjugacy implies that 
the normalized weights follow a $\dirichlet(\aw+n_1, \dots, \aw+n_{\CG})$, with $n_j$ the number of observations in component $j$.

\noindent\textbf{4) Updating component locations $\ss$ and \matern birth-times $\tt$:} 
Again, updating $\ss$ and $\tt$ is not strictly necessary, nevertheless, we find this improves mixing.
With $|\G|$ and $|\Gt|$ determined, updating these is straightforward, if a little tedious.
Unlike standard mixture models, because of repulsion, component locations are not conditionally independent.
Write $\s_j$ for the location of $j$-th component, and $\G_{(\s_j=\s^*)}$ for $G$ with $\s_j$ updated to $\s^*$. Then, writing $\XX_j$ for the observations assigned component $j$, the conditional of $\s_j$ is
\begin{align}
	\label{eqn:ss}
	\pgiven{\s_j=\s^*}{-} %
	& \propto %
    \ps{\s^*}\prod_{x\in\XX_j}\px x{\s^*}
    \prod_{\g\in\G}\bsbr{1-\shadow\g{\G_{(\s_j=\s^*)}}}\prod_{\gt\in\Gt}\shadow\gt{\G_{(\s_j=\s^*)}}.
\end{align}
The last two products account for how changing the $j$th event's location changes the shadow, and therefore the probability of the current \matern and thinned events.
The other two terms are the prior and likelihood of $\s_j$ under a mixture model without repulsion.
A simple way to simulate from this is with a Metropolis-Hastings step,
and when the prior $p_\s$ is conjugate to the likelihood $p(x\vbar\s)$, a natural choice for the proposal distribution 
is the posterior distribution if there were no repulsion:
$	q_j\rbr{\s_j} \propto \ps{\s_j} \prod_{x\in\XX_j}\px{x}{\s_j}$.

Like the component locations, the birth-times $\tt$ of the \matern events can also be updated one at a time.
Given the component locations, $\tt$ is independent of the observations or their cluster assignments, and one only needs to consider their impact on the shadow (\Cref{prop:X-G-Gt}).
Specifically, if $t_j$ is the birth time of the $j$-th event, then
\begin{equation*}
	\pgiven{\t_j}{-} 
	\propto p\big({\G, \Gt}\,\big|\,{\Rate, \thin}\big) %
	\propto \prod_{\g\in\G}\bsbr{1-\shadow\g\G}\prod_{\gt\in\Gt}\shadow\gt\G.
\end{equation*}
Since $t_j \in [0,1]$, simulating from this is straightforward, though we can simplify this further.
When the thinning kernel is symmetric, for two events $j$ and $k$, the probability of $k$ thinning $j$ if $k$ were older, is the same as $j$ thinning $k$ if $j$ were older. Thus, changing $t_j$ will only change which thins which, and not affect the thinning probability, so that %
the first product term can be dropped.
Next, the birth-times of the thinned events $\gt_j=(\st_j, \wt_j, \tT_j)\in \Gt$ can be used to partition the interval $\Time=[0,1]$ into segments $[\tT_j, \tT_{j+1})$, $j=1,\dotsc,\CGt-1$. 
If the thinning probability is a function only of separation in space (as is the case with all kernels we have considered), then the probability of $t_j$ within each segment is constant, depending only on the identities of the thinned events born before and after the interval $[\tT_j, \tT_{j+1})$. 
For any time $t$, define $\Gt^{\le t}$ as the  events in $\Gt$ born before or at $t$, and define $\Gt^{>t}$ similarly. Then
\begin{align}
	\label{eqn:tt}
    \pgiven{\t_j\in[\tT_{j}, \tT_{j+1})}{-} & \propto \prod_{\gt \in \Gt^{\le t_j}}\shadow{\gt}{\Gnj}
		          \prod_{\gt \in \Gt^{> t_j}}\shadow{\gt}{\G}.
\end{align}
Having picked a segment, the exact value of $t_j$ is drawn uniformly within the segment.

\noindent\textbf{5) Updating hyperparameters:} 
Hyperparameters include the primary Poisson process intensity, and those in the thinning kernel.
The intensity $\mRate$ controls the cardinality of $F$, and it is easy to show that with a $\dGamma{a,b}$ prior, and with the constraint $\card\F > 0$, the conditional posterior is
$	\pgiven{\mRate}{-} \propto\frac1{1-e^{-\mRate}}\dGamma{\mRate\; a+\CF, b+1}$.
Write $\nu$ for any parameters of the normalized Poisson intensity $p_\Theta(\theta\vbar\nu) = \lambda_\Theta(\theta)/\mRate$.
For a prior $p_\nu(\nu)$, the conditional simplifies as
$ p(\nu\vbar-) \propto p_\nu(\nu) \prod_{\theta \in F_\Theta} p_\Theta(\theta\vbar\nu) $.
Finally, writing $\pthin$ for the prior for the thinning parameter $\thin$, the posterior is
$ \pgiven{\thin}{-}\propto\pthin(\thin)\prod_{\g\in\G}\sbr{1-\shadow\g\G}\prod_{\gt\in\Gt}\shadow\gt\G.  $
All three distributions above can be updated using any standard MCMC kernel.

\section{Related Work} \label{sec:related-work}
Work on repulsive mixture models dates back to at least \citet{dasgupta1999learning}, who %
demonstrated the importance of separated components for learning mixture models. %
An early Bayesian mixture model with repulsion was proposed in~\cite{petralia2012repulMix}. 
Here, repulsion was induced through a Gibbs point process mechanism:  
specifically, the prior probability of any configuration of component locations was proportional to the product of individual component probabilities multiplied by a term that penalizes nearby components.
The authors there considered two types of penalties, one corresponding to a product of penalty terms for each pair of components, and one depending on the minimum separation between components. %
\citet{xie2019bayesian} and \citet{quinlan2018density} generalized this model slightly, and also derived posterior rates of convergence.
\citet{fuquene2019choosing} considered a similar approach to~\citet{petralia2012repulMix}, though they framed their work in the more general setting of {\em non-local priors}. 
Here, given a collection of nested models, parameter configurations in a more complex model that result in an identical density to some configuration in a simpler model are given zero probability.
All these works however face computational challenges: the flexibility Gibbs processes comes at the cost of intractable normalization constants. %
This is especially severe when trying to infer parameters of the repulsive penalty, or switch between models with different numbers of components.
Our work replaces the Gibbs point process with the \matern type-III process, though %
one can use other underdispersed point processes. 
In~\cite{bianchini2018determinantal}, the authors use a determinantal point processes (DPP) \citep{hough2006determinantal, scardicchio2009statistical, lavancier2015determinantal}. %
While mathematically and computationally elegant, DPPs are not as mechanistic and directly interpretable as our thinning mechanism.
In our experiments, we compare with the models of~\citet{xie2019bayesian} and~\citet{bianchini2018determinantal}.
More recently, in~\citet{beraha2022mcmc}, the authors propose an exact MCMC sampler that like ours work side-steps the need for reversible jump MCMC.
This auxiliary variable method focuses on a different class of repulsive models, and relies on a perfect simulation algorithm. Finally, \citet{beraha25} propose a general framework that subsumes both repulsion and attraction in mixture models, though their focus is on the characterization of (rather than efficient simulation from) conditional distributions.

We end by noting that another line of work takes a post-processing approach, deliberately using mixtures with a large number of components, and then discarding unoccupied clusters \citep{fruhwirth2019here, saraiva2020bayesian}, and merging nearby clusters together \citep{malsiner2016model}.
Unlike model-based approaches like ours, these are a bit ad hoc, making it difficult to coherently calibrate uncertainty, especially in more complicated hierarchical models.
We refer the reader to~\citet{fruhwirth2019here} for a comprehensive overview of these and related issues.

\section{Experiments} \label{sec:experiment}

In this section we evaluate different settings of our MRMM model and MCMC algorithm, and compare with two other repulsive models: the DPP-based method of \citet{bianchini2018determinantal} and the repulsive Gaussian mixture model of \citet{xie2019bayesian}. 
We implemented our method as a Python3 package \texttt{mrmm}\footnote{available in supplementary material}.
An \texttt{R} implementation of the method of \citet{bianchini2018determinantal} was acquired directly from the authors, while a \texttt{MATLAB} implementation of the method of \citet{xie2019bayesian} was obtained from their supplementary material.

For our model, we placed a $\dGamma{1, 1}$ prior on the unnormalized weights $\ww$ %
and a $\dGamma{1,0.1}$ prior on the primary process intensity $\mRate$.
We considered three thinning kernels, the hardcore, probabilistic and squared-exponential kernel. 
The supplementary material includes more details of the experimental setup.
Typically, we ran 5000 MCMC  iterations, with the first half discarded as burn-in.
To evaluate sampler efficiency, we first computed the effective sample size (ESS) of a number of posterior statistics (we report this  only for $C$, the number of components, though others like the parameter $\eta$ perform similarly).
Dividing this by the total sampler runtime gives the ESS per second (ESS/s), an estimate of the number of independent samples produced per second. 
Since all models were implemented in different languages, this metric should be viewed not as an exact measure of performance, but rather to understand mixing, and how they scale. Ultimately though, we believe the biggest advantage of our sampler is its simplicity.

We also evaluated the different models using statistical performance and parsimony. For the former, 
we reported the predictive likelihood $\lnpTest$ of a held-out test dataset $\XX_{test}$, as well as 
the log pseudo-marginal likelihood $\text{LPML} =\sum_i\log p\rgiven{x_i}{\XX^{-i}}$ where $\XX^{-i}$ denotes the dataset without the $i$-th observation \citep[see][]{bianchini2018determinantal}. 
To assess the parsimony, we reported the posterior mean and variance of the number of components ($\EC$ and $\VarC$), as well as a central estimate of the posterior clustering structure (a `median' posterior clustering).
The latter was obtained by minimizing the posterior expectation of Binder’s loss function under equal misclassification costs \citep{bianchini2018determinantal, lau2007bayesian}. 
We denote the number of components in this estimate as $\hC$.

\subsection{Study of thinning kernels and thinning strengths} \label{sec:exp-syn_mn}
We first study the effect of different thinning kernels and thinning strengths on MRMM inferences. 
\Cref{tbl:kernels_mn} lists thinning kernels and parameters used. %
\begin{table*}
	\centering
    \footnotesize
	\begin{tabular}{|l|ll|l|}
		\hline
		Thinning Kernel     & \multicolumn{2}{l|}{Thinning Parameter}   & Expression                                                \\\hline\hline
		Hardcore            & $\thin=R$       & Radius $R>0$            & $\kernelR\s{\s'}=\ind{\|\s-\s'\|<R}$                      \\\hline      
		Probabilistic       & $\thin=(R, p)$  & Radius $R>0$, Probability $p\in[0,1]$            & $\kernelRp\s{\s'}=p\ind{\|\s-\s'\|<R}$                    \\
		Squared-exponential & $\thin=l$       & Lengthscale $l>0$       & $\kernelSp\s{\s'}l=\exp\cbr{-\frac{\|\s-\s_j\|^2}{2 l}} $ \\\hline
	\end{tabular}
	\caption{Thinning kernels used in experiments}
	\label{tbl:kernels_mn}
\end{table*}
We consider a series of two-dimensional Gaussian mixture models, %
each with four equally weighted, unit-variance Gaussian components, located at $(-d/2, 3d/2)$, $(d/2, d)$, $(d, -d)$, $(-3d/2, -3d/2)$. %
Training and test datasets of size 200 and 100 were simulated for $d=1,2,3,4$. 
We set $p_\Theta(\s)$ to a Gaussian with mean zero and covariance $10 I_2$, and  placed an inverse-Wishart prior with two degrees of freedom and a scale matrix $I_2$ on the covariances. 
When learning the thinning radius $R$ or lengthscale $l$, we placed a $\dGamma{4, 2}$ prior with mean 2 and variance 1.

\begin{figure}
	\centering
	\includegraphics[width=\linewidth]{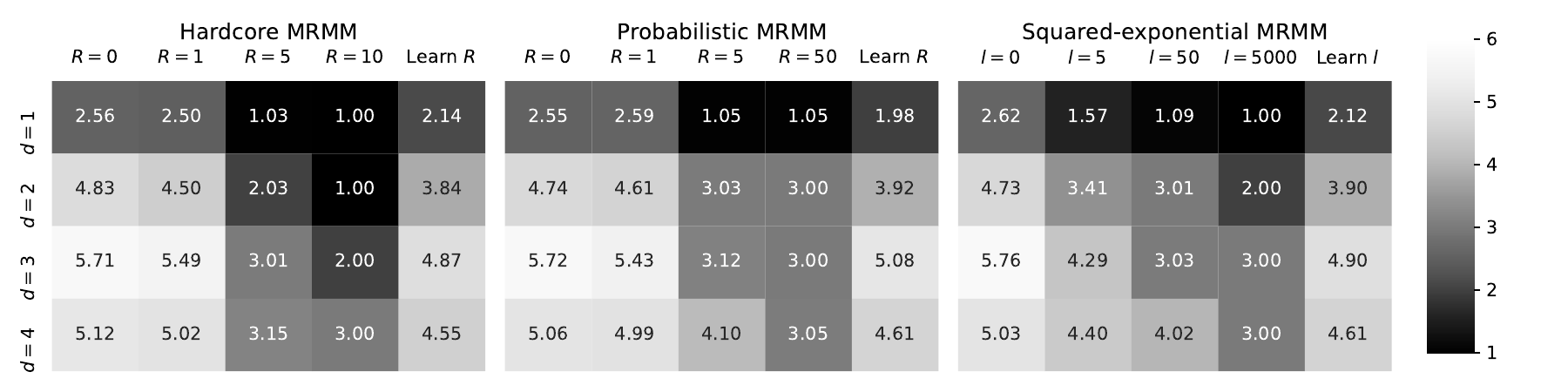}
	\centering
 \includegraphics[width=\linewidth]{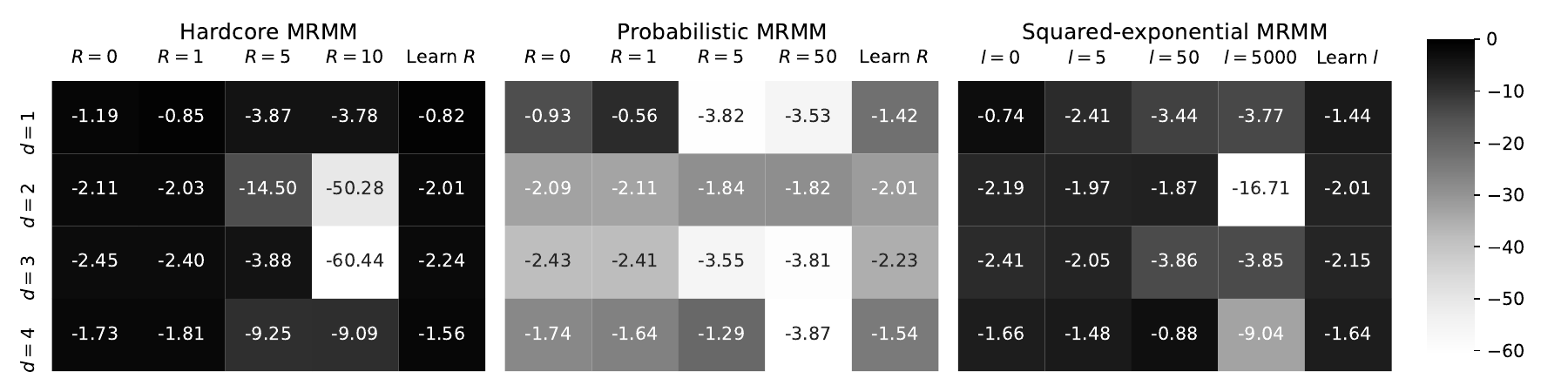}
	\caption{
		(Top) Posterior mean of number of clusters $\EC$, (Bottom) Difference between test likelihood under the posterior and the true model $M_0$, $\lnpTest - \ln \pgiven{\XX_{\text{test}}}{M_0}$.
	}
	\label{fig:syn-sum-lnpTest-diff_mn}
\end{figure}

The supplementary material discusses the results in detail, we focus here on 
\Cref{fig:syn-sum-lnpTest-diff_mn}, whose top and bottom panels evaluate parsimony and the goodness-of-fit.
As expected, increasing repulsion strength results in greater parsimony, with the posterior mean of the number of clusters dropping. 
Interestingly, moderate values of repulsion do not significantly harm the model fit.  
However, a strong repulsion strength does result in a drop in predictive power, especially for the hardcore MRMM.
This is a pattern we will continue to see with the real data.
In the setting where we learn $R$, we observe good predictive performance, and reasonable parsimony, though a few settings suggest that a stronger prior might be needed.

\subsection{Setting thinning parameters via empirical Bayes}
\label{sec:eb}
There are a few ways to set the parameters of the \matern kernel. 
The first is simply by calibration through repeated prior simulation: unlike Gibbs-type repulsive priors, simulation under our model is easy and efficient. 
Another approach is the prior elicitation method from \cite{beraha2022mcmc}. %
Denote by $\pi(r)$ the kernel density estimate of the pairwise distances between observations, and fix $\rloc$ as the smallest local minimum of $\pi(r)$:\\
$
\hspace*{.8in} \rloc = \min_{r>0}\{r : r\text{ is a local minimum for }\pi(r)\}.
$\\
The rationale here is that with a multimodal density $\pi(r)$,  the smallest group of pairwise-distances reflects within-cluster distances, with the rest largely corresponding between-cluster distances. Thus, the smallest local minimum of the density typically lies between the mode of within-cluster distances and the between-cluster distances. %

We propose a third approach when the within-cluster and between-cluster distances substantially overlap. Now, the distribution of pairwise distances %
has no well-defined local minimum. %
We propose running $k$-means clustering on the pairwise distances over a range of values of $k$. For each $k$, discard clusters assigned fewer than some fraction of the total number of datapoints, and compute the minimum pairwise distance among the surviving clusters. Call this $d_{\min,k}$.
We propose setting the thinning radius as \\
$
\hspace*{.6in}	R = (d_{\min, k^{\prime}} + d_{\min, k^{\prime}+1})/2, \quad \text{where }
    k^{\prime} = \underset{2\leq k\leq k_{\max}-1}{\operatorname*{argmax}}(d_{\min,k} - d_{\min,k+1}).
$\\
The rationale is to look for a sudden drop in the minimum intercluster distance, and use this to set typical cluster separation $R$. %

To illustrate this, we consider the following two-dimensional Gaussian mixture model:\\
\renewcommand{\arraystretch}{0.6}
\setlength{\arraycolsep}{3pt}
$
 \hspace*{.4in}   y_{1},...,y_{n} \stackrel{iid}{\sim} 0.75 N_{2}\left(\begin{pmatrix} 0\\ 0\end{pmatrix},\begin{pmatrix} 4 & 3.2\\ 3.2 & 3\end{pmatrix}\right) + 0.25 N_{2}\left(\begin{pmatrix} 3\\ 3\end{pmatrix},\begin{pmatrix} 3 & -2.1\\ -2.1 & 3\end{pmatrix}\right)
$\\
\renewcommand{\arraystretch}{1}
\setlength{\arraycolsep}{5pt}
where $N_{d}(\boldsymbol{\mu},\boldsymbol{\Sigma})$ ($d\geq 2$) denotes a $d$-dimensional Gaussian distribution with mean vector $\boldsymbol{\mu}$ and covariance matrix $\boldsymbol{\Sigma}$. We simulated a training dataset of size 600 and a test data with 300 observations were simulated independently from this.
We model this dataset as a MRMM, with prior $p_{\Theta}(\theta)$ a Gaussian with mean zero and covariance $10I_{2}$, and an inverse Wishart prior with 2 degrees of freedom and a scale matrix $I_{2}$ on the covariances.

\begin{figure}[tbh!]
\begin{minipage}{0.55\textwidth}
	\centering
	\includegraphics[width=\textwidth]{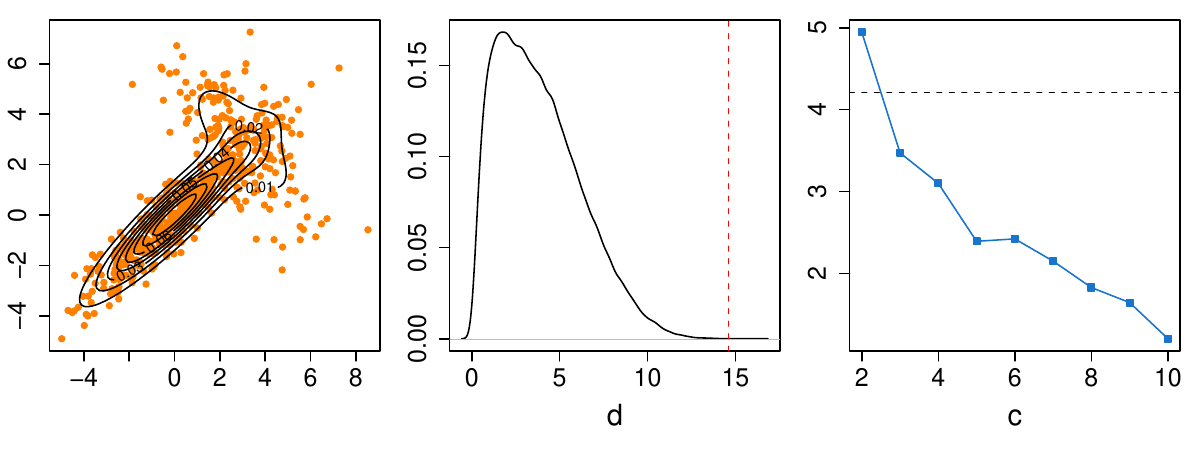}
    \includegraphics[width=\textwidth]{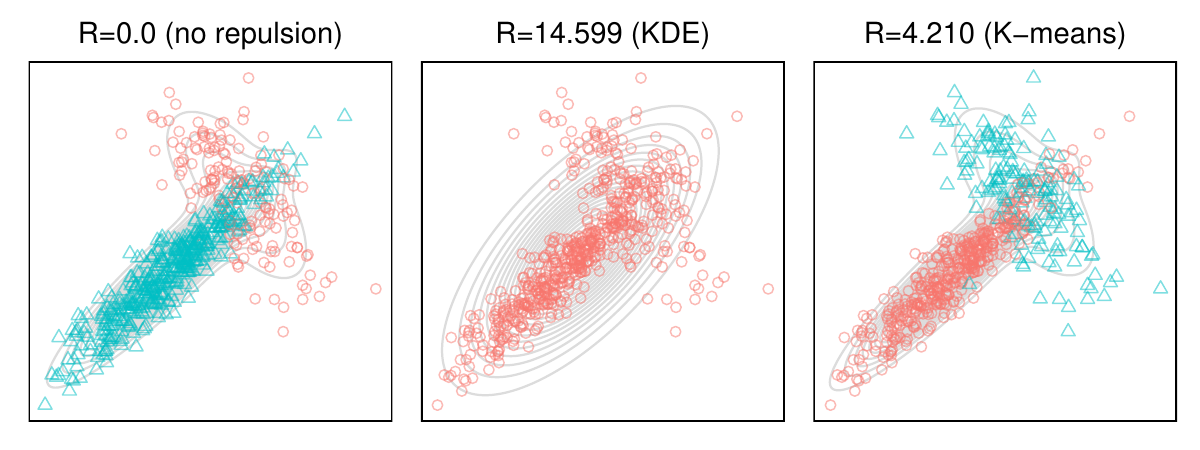}
    \end{minipage}
\begin{minipage}{0.38\textwidth}
	\caption{
	Top: (Left) Scatterplot of data with true mixture density. (Middle) Kernel density estimate of pairwise distances (Right)  $d_{\min,k}$ versus $k$.
    Bottom: Contour plot and cluster assignments of the bivariate data for hardcore MRMM.}
	\label{fig_EB_results_2d}
\end{minipage}
\end{figure}

The top-middle panel of Figure~\ref{fig_EB_results_2d} shows the kernel density estimate of pairwise distances. The distribution is unimodal, resulting in a relatively large estimated thinning radius of \(14.599 \). The top-right panel of Figure~\ref{fig_EB_results_2d} presents the minimum pairwise distance between cluster centers. We observe a pronounced drop in \( d_{\min,c} \) when increasing the number of clusters from \( c = 2 \) to \( c = 3 \), suggesting the emergence of a redundant cluster when fitting mixture models with \( c \geq 3 \). In this case, the estimated thinning radius is \( 4.210 \). %
The results for the hardcore MRMM are summarized in the bottom panel of~\Cref{fig_EB_results_2d} and Table~\ref{table_hardcore_2d}. %
\begin{table}[!h]
\centering
\footnotesize
\begin{tabular}{c|ccc|cc}
\hline
 Repulsion strength & $\EC$    & $\VarC$ & $\hC$ & $\lnpTest$ &	LPML\\
\hline\hline
\cellcolor{gray!0}{$R=0.0$ (no repulsion)} & \cellcolor{gray!0}{2.30} & \cellcolor{gray!0}{0.2642} & \cellcolor{gray!0}{2} & \cellcolor{gray!0}{-1103.53} & \cellcolor{gray!0}{-2224.42}\\\hline
$R=14.599$ (KDE-based) & 1.00 & 0.0000 & 1 & -1252.76 & -2492.42\\\hline
\cellcolor{gray!0}{$R=4.210$ (K-means-based)} & \cellcolor{gray!0}{2.02} & \cellcolor{gray!0}{0.0158} & \cellcolor{gray!0}{2} & \cellcolor{gray!0}{-1103.11} & \cellcolor{gray!0}{-2224.11}\\
\hline
\end{tabular}
\caption{Posterior summaries of hardcore MRMM on the bivariate dataset.}
\label{table_hardcore_2d}
\end{table}

The ideas above can also be used to set a hyperprior the parameters of the repulsive kernel. %
Now, we center the hyperprior $p(R)$ over the separation value identified as described above.
While this has the advantage of learning (rather than fitting) the thinning parameters, we mention that it is important to use a relatively informative hyperprior, since otherwise the model can revert to no repulsion to maximize the data fit.

\subsection{Chicago 2019 homocide data} \label{sec:exp-chicago}

We next consider a dataset of homicide recordings, collected in Chicago, Illinois in the year 2019\footnote{obtained from \url{https://data.cityofchicago.org/Public-Safety/Crimes-2019/w98m-zvie}}. 
This  consists 501 entries, which we randomly split into 416 (85\%) training data points and 85 (15\%) testing data points. 
\Cref{fig:crime-hard}(left) shows the training data, consisting of the latitude and longitude of each homicide. %
These range from $(-87.8066, -87.5293)$ to $(41.6572, 42.0208)$, and we modeled their spatial distribution with MRMM, specifically, a two-dimensional Gaussian mixture model with hardcore repulsion. %
We set $p_\Theta(\s)$ to a Gaussian density, with mean $(-87.6727, 41.8180)$ (centered in Chicago), and with variance set to $7\times10^{-3}I_2$ (to cover the entire city).
We placed an inverse-Wishart prior with 2 degrees of freedom and scale matrix $3.5\times10^{-3}I_2$ on the covariance of each Gaussian mixture component.
In settings where we wished to learn the thinning radius $R$, we placed a $\dGamma{40, 200}$ prior on $R$, corresponding to {a prior mean of $0.2$ and variance of $0.001$.}

\begin{figure}[H]
    \raisebox{0.4cm}{
	\includegraphics[width=0.17\linewidth]{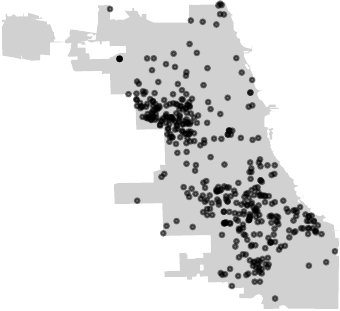}
    }
	\includegraphics[width=0.82\linewidth]{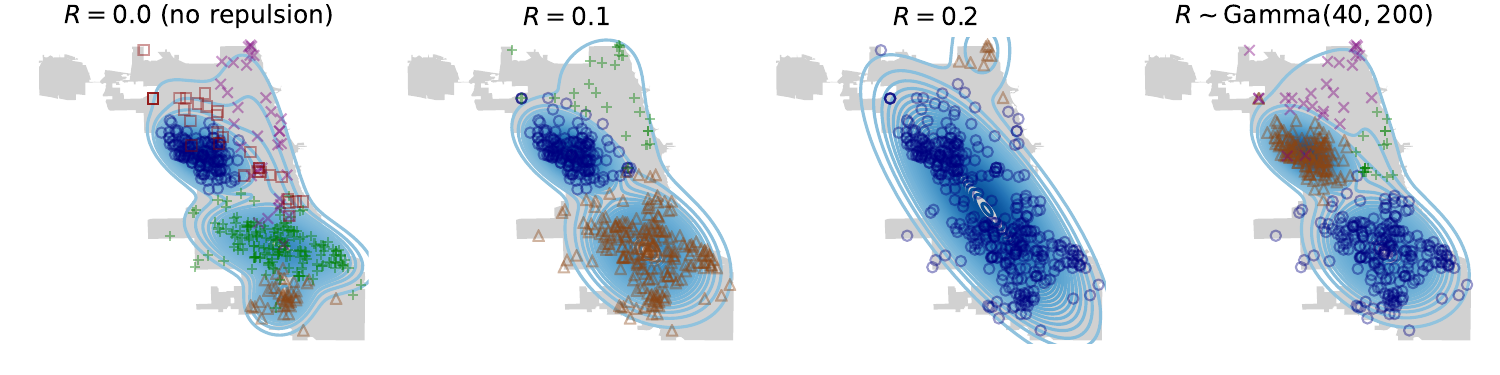}
	\caption{
		Chicago crime data, with contours/component assignments of hardcore MRMM.}
	\label{fig:crime-hard}
\end{figure}
\Cref{fig:crime-hard} and \Cref{tbl:crime-hard} show the results from the hardcore MRMM with different thinning radii. 
Across all posterior samples, there were 3 dominant components, with the remaining components accounting for a small portion of observations. 
Without any repulsion, the observations to the south of Chicago are assigned to three components.
Increasing the repulsion radius to $0.1$ simplifies these three components into a single large component,
even though the observations here deviate slightly from the Gaussian assumption, illustrating the robustness of MRMM to model misspecification. 
\Cref{tbl:crime-hard} shows that this simpler model does not come at the cost of a serious loss in predictive power.
Increasing the thinning radius to $0.2$ on the other hand causes a steep drop in predictive performance, with a majority of the data points now being assigned to a single component (with a few observations to the north-east assigned to their own component).
Inferring the thinning radius results in a posterior mean and variance $\EE\sgiven R\XX = 0.08$, $\Var\rgiven R\XX = 0.0001$,
and achieves a good trade-off between parsimony and goodness-of-fit.
Here again, south Chicago is covered by a  single component instead of multiple components as in the no-repulsion case. 

Similar results using probabilistic thinning are included in the supplementary material. 
One takeaway of this and subsequent experiments is that the more complicated probabilistic and softcore thinning mechanisms discussed in~\citet{rao2017matern} are not necessary in mixture modeling applications.
This is due to the fact that the number of mixture components is much smaller than the number of observations.
Consequently, simple hardcore thinning will suffice, and is typically preferable, since it more strongly enforces parsimony.

\begin{table*}
	\centering
    \footnotesize
	\begin{tabular}{c|ccc|cc}
		\hline
		Repulsion strength      & $\EC$    & $\VarC$ & $\hC$ & $\lnpTest$ &	LPML	 \\ \hline\hline
		$R=0.0$ (no repulsion) &    5.20 &    0.4028 &       5 &   252.54 &  1349.08 \\\hline
		$R=0.1$ &    3.51 &    0.2859 &       3 &   248.72 &  1312.30 \\\hline
		$R=0.2$ &    2.00 &    0.0000 &       2 &   232.95 &  1223.68 \\\hline
		$R\sim\dGamma{40, 200}$ &    3.68 &    0.2416 &       4 &   248.51 &  1318.73 \\\hline
	\end{tabular}
	\caption{
    Posterior summaries of hardcore MRMM on Chicago crime dataset.}
	\label{tbl:crime-hard}
\end{table*}

\subsection{Protein structure data}\label{sec:exp-protein}
The Malate dehydrogenase protein dataset, %
publicly available as \texttt{7mdh} in the protein data bank \citep{berman2002protein}, consists of 500 pairs of torsion angles, each pair 
$x=(\phi,\psi)\in [-\pi, \pi)\times[-\pi, \pi)$ forming a point on a torus.
\Cref{fig:protein-data} plots this data, with the right panel showing a planar representation known as the Ramachandran plot \citep{Ramachandranetal1963}.
While the latter shows the underlying clustering structure, it ignores the fact that the edges wrap back to each other, 
making common distributions on two-dimensional Euclidean spaces (e.g.\ mixture of normals or Betas) inappropriate.
Instead, we model this data as a mixture of uncorrelated bivariate von Mises distributions~\citep{mardia1975statistical}.
\begin{figure}[H]
  \begin{minipage}[!hp]{0.32\linewidth}
	\includegraphics[width=.99\linewidth]{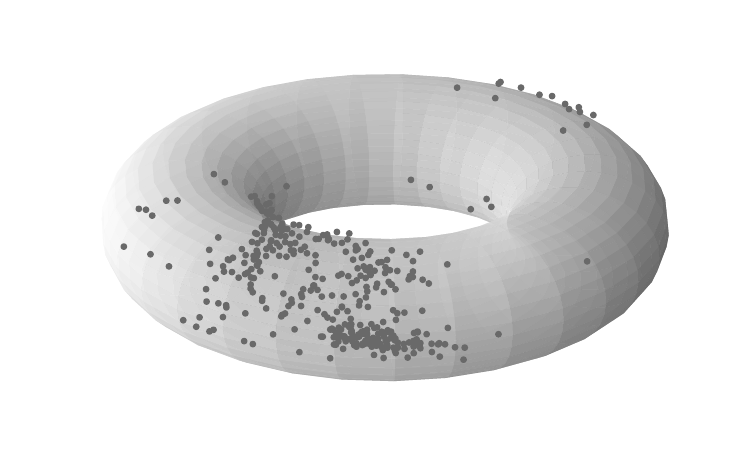}
  \end{minipage}
  \begin{minipage}[!hp]{0.3\linewidth}
	\includegraphics[width=.99\linewidth]{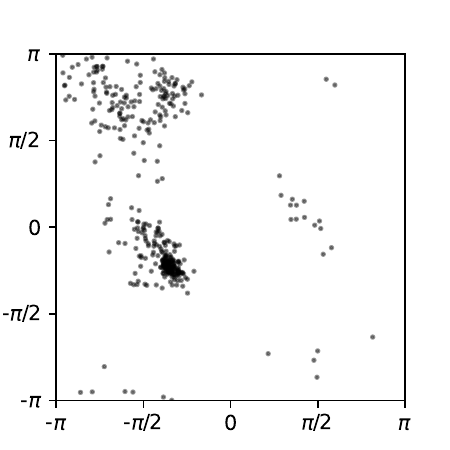}
  \end{minipage}
  \begin{minipage}[!hp]{0.35\linewidth}
	\caption{
		The Malate dehydrogenase protein data, plotted  
		\textbf{(Left)} on a torus. \textbf{(Right)} as a Ramachandran plot, where the torus is flattened to 2-d.}
	\label{fig:protein-data}
  \end{minipage}
\end{figure}

The univariate von Mises distribution %
has density
$
	\pgiven{\phi}{\mu, \kappa} = \frac1{2\pi I_0(\kappa)}\exp\cbr{\kappa \cos(\phi-\mu)},
$ for $\phi\in[-\pi, \pi)$:
here $\mu$ is the center (mean and mode), $\kappa > 0$ measures concentration around this, and $I_0(\cdot)$ is the modified Bessel function of the first kind of order $0$. 
This distribution is analogous to the univariate Gaussian distribution in the Euclidean space, though it captures the periodicity of the angular variables.
It converges to the uniform distribution on $[-\pi, \pi)$ when $\kappa\rightarrow0$. 
Writing each observation as $x=(\phi,\psi)$, we model these using a \matern repsulsive mixture model, where under each mixture component, the angles $\phi$ and $\psi$ are independent von Mises variables.
Write the parameters of each mixture component as $\s=(\mu_1, \mu_2)$ and $\kappa=(\kappa_1, \kappa_2)$, then observations from that component have density
$
	\px{x=(\phi,\psi)}{\s, \kappa} \propto 
	\exp\cbr{\kappa_1 \cos(\phi-\mu_1) + \kappa_2 \cos(\psi-\mu_2)}.
$
We set $p_\Theta(\s)$ to the bivariate uniform distribution on $[-\pi.\pi]\times[-\pi,\pi]$, %
and placed a $\dGamma{10, 1}$ %
prior on the concentration parameter $\kappa$.
To induce \matern thinning, we computed distances on the torus as 
$
	d_2((\phi, \psi), (\phi',\psi')) = \sqrt{d_1(\phi,\phi')^2 + d_1(\psi,\psi')^2},
$
where $d_1(\phi,\phi') = \min\cbr{|\phi-\phi'|, \pi-|\phi-\phi'|}$. 
This distance was used in a standard harcore or probabilistic thinning kernel.

 \Cref{fig:protein-hard} and \Cref{tbl:protein-hard} show the results with different levels of repulsion. Observe from~\Cref{fig:protein-data} that the data consists three large components of observations, with a couple of smaller components. Our model without repulsion returns about 12 components on average under the posterior distribution, with the leftmost panel of~\Cref{fig:protein-hard} showing the median clustering.
\begin{figure}[H]
	\includegraphics[width=\linewidth]{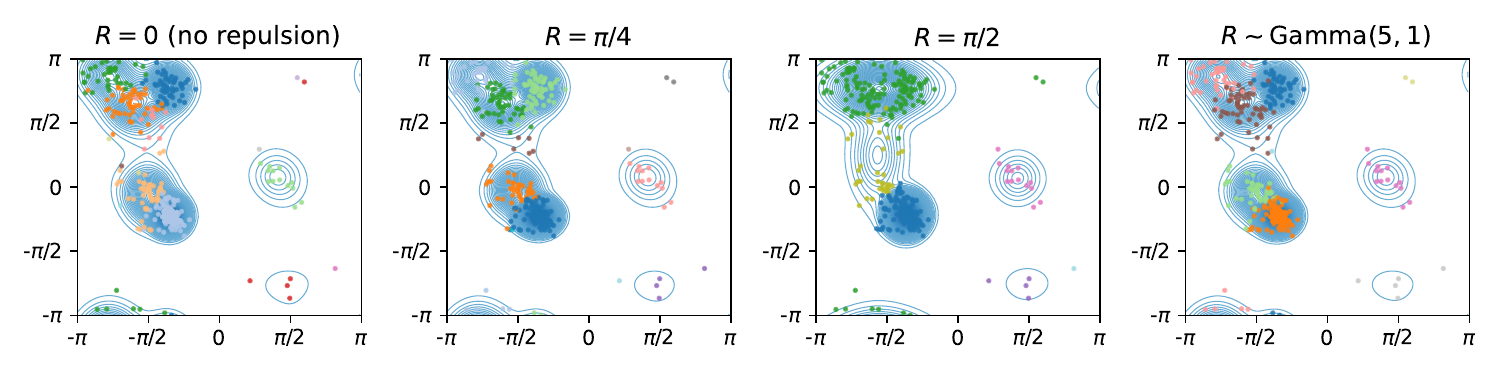}
	\caption{
		Contours  and cluster assignments of the protein data from hardcore MRMM.}
	\label{fig:protein-hard}
\end{figure}
\begin{table*}
	\centering
    \footnotesize
	\begin{tabular}{c|ccc|cc}
		\hline
		Repulsion strength      & $\EC$    & $\VarC$ & $\hC$ & $\lnpTest$ &	LPML	 \\ \hline\hline
		$R=0$ (no repulsion) &   12.29 &    4.1242 &      14 &  -177.43 & -644.23 \\\hline
        $R=\pi/4$ &   10.22 &    1.6658 &      12 &  -177.52 & -646.58 \\\hline
		$R=\pi/2$ &    5.55 &    0.3999 &       6 &  -199.78 & -703.62 \\\hline
		$R\sim\dGamma{5, 1}$ &   11.13 &    2.5746 &       9 &  -177.76 & -647.00 \\\hline
	\end{tabular}
	\caption{
    Posterior summaries of hardcore MRMM on the Malate protein dataset.}
	\label{tbl:protein-hard}
\end{table*}
As with the Euclidean setting, increasing repulsion strength results in fewer components, simpler posterior distributions (indicated by smaller posterior variance) and more interpretable results.
A strong repulsion ($R=\pi/2$) produced around 5 components, agreeing with the findings in \citet{mardia2007protein}, though resulting in a drop in model fit and predictive power. 
Placing a $\dGamma{5, 1}$ prior (mean 5, variance 5) on the thinning radius infers weaker repulsion (a posterior mean and variance for $R$ equal to $0.19\pi$ and $0.017\pi^2)$, and thus more components ($11$ on average).
These results are partly because of our choice of component likelihoods, where the two angles are independent under each component.
The component near the origin on the other hand exhibits strong correlation between the angles, and our MRMM model has to split this into two (\Cref{fig:protein-hard}, right).
We can easily extend our model %
so each component is a bivariate von Mises distribution with correlations, or use geodesic distances~\citep{mardia1975statistical, mardia2007protein}.
The former however introduces intractable normalization constants, and to avoid unnecessary complications~\citep{rao2016data, lin2017}, we have not followed this path.
We emphasize that modeling repulsion on non-Euclidean spaces using existing models is a less straightforward proposition.

\subsection{Comparison with \citet{xie2019bayesian} on the Old Faithful dataset} \label{sec:exp-faithful}

The Old Faithful  dataset~\citep{silverman1986density},
recording eruption lengths of the Old Faithful geyser in the Yellowstone National Park, was used by~\citet{xie2019bayesian} to evaluate their model, and here, we use it to compare our model with theirs.
Following \citet{xie2019bayesian}, we paired each eruption duration time with the time length of the next, resulting in 271 bivariate observations.
We split this into training and test sets of size $219$ and $52$. 

As in the setup of \citet{xie2019bayesian}, we used a Gaussian $p_\Theta(\s)$, centered at $(0, 0)$, and with covariance $10I_2$. 
For the covariance matrix of each mixture component, \citet{xie2019bayesian} assumed independence between the two dimensions and placed truncated inverse $\dGamma{1, 1}$ priors on the diagonal elements. 
We used the more natural inverse-Wishart prior with 2 degrees of freedom and scale matrix $I_2$ on the covariance matrices. 
We set the repulsive parameter of~\citet{xie2019bayesian} to its default setting of their code (also the setting in their experiments).
\Cref{tbl:faithful-hard} and \Cref{fig:faithful-hard} report posterior summaries of both models. 

This dataset consists of four clearly separated components, and for all models, the posterior mean of the number of components was around this value.
MRMM returns slightly higher estimates compared to \citet{xie2019bayesian}, but with a much smaller sample variance, suggesting a simpler, more concentrated posterior. 
So long as the thinning radius is not forced too large, MRMM also returns better fits, both in terms of predictive likelihood and LPML.
Both the model of~\citet{xie2019bayesian} and MRMM with $R=2$ merge the two top components into a large component, whereas other settings of MRMM keep them separated. 
This is also the case with a $\dGamma{4, 2}$ prior on $R$, here the thinning radius has posterior mean $\EE\sgiven R\XX = 1.40$ and variance $\Var\rgiven R{\XX} = 0.1864$, with an average of $4$ components.
\begin{figure}[H]
	\includegraphics[width=\linewidth]{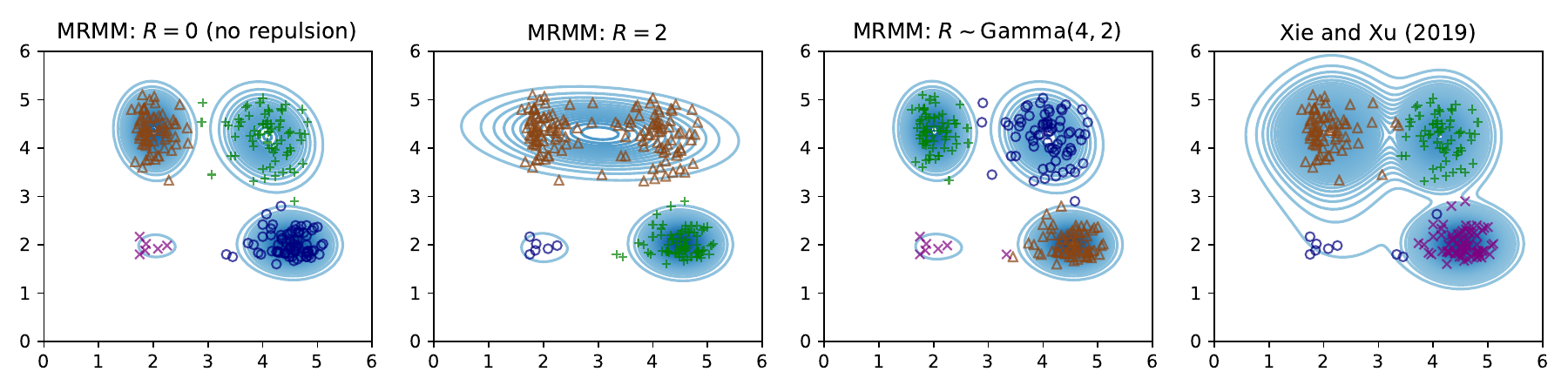}
	\caption{
		Contours and cluster assignments of Old Faithful dataset with hardcore MRMM.}
	\label{fig:faithful-hard}
\end{figure}
\begin{table*}
{\footnotesize
	\centering
	\begin{tabular}{l|ccc|cc|c|c}
		\hline
        Model & $\EC$ \hspace{-.2in} & $\VarC$  \hspace{-.2in} & $\hC$ & $\lnpTest$ \hspace{-.2in} &  LPML & Runtime (s) & ESS/s    \\ \hline\hline
		\citet{xie2019bayesian} &    3.71 &    0.212 &       4 &  -104.32 & -464.22 &  225.6 & 0.01\\\hline
		MRMM, $\ R=0$ &    4.02 &    0.018 &       4 &   -95.80 & -421.17 &  266.5 & 0.67 \\\hline
		MRMM, $\ R=2$ &    3.00 &    0.000 &       3 &  -114.84 & -489.83 &  251.1 & 5.54 \\\hline
		MRMM, $\ R\sim\dGamma{4, 2}$ &    4.01 &    0.012 &       4 &   -95.77 & -420.54 &  279.4 & 0.07 \\\hline
	\end{tabular}
	\caption{
    Posterior summaries of hardcore MRMM on the Old Faithful geyser eruption data.}
	\label{tbl:faithful-hard}
    }
\end{table*}

As \Cref{tbl:faithful-hard} shows, while both models required roughly the same time per iteration (though  
implemented in Matlab and Python),
mixing in their case was poorer, and we had to run their algorithm for twice the number of iterations as ours to get stable results. 
This can be seen in the ESS/s numbers, where our sampler shows (often much) larger  values. 
Similar results for probabilistic MRMM are in the supplementary material.

\subsection{Comparison with \citet{bianchini2018determinantal} on the Galaxy dataset} \label{sec:exp-galaxy}
\begin{figure}[H]
	\includegraphics[width=\linewidth]{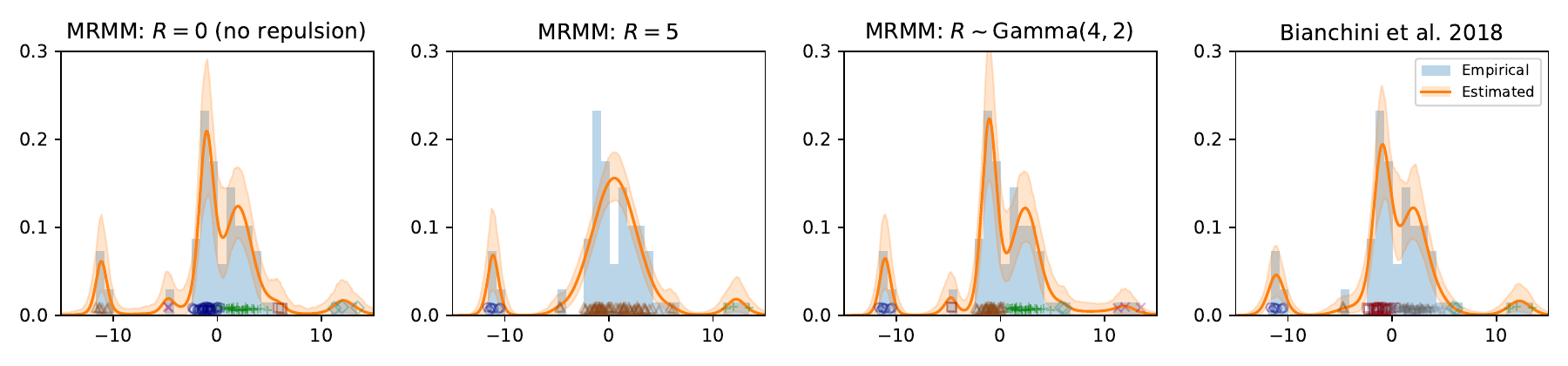}
	\caption{
		Contour plot and cluster assignments of the Galaxy data for hardcore MRMM.}
	\label{fig:galaxy-hard}
\end{figure}
The Galaxy dataset~\citep{roeder1990density} available from the \texttt{DPpackage} in \texttt{R}, contains the velocities of 82 different galaxies. %
\citet{bianchini2018determinantal} evaluated their model on this well-known dataset, 
using LPML as their goodness-of-fit criteria. 
We use the same here. %
Following the same steps as \citet{bianchini2018determinantal}, we centered the data and rescaled it by a factor of $10^{-3}$, %
set $p_\Theta(\s)$ to a  mean 0 and standard deviation 10 Gaussian, and placed an inverse-$\dGamma{3, 3}$ prior on the variance of each mixture component.

\begin{table*}
{\footnotesize
	\centering
	\begin{tabular}{l|ccc|c|c|c}
		\hline
		Model & $\EC$  & $\VarC$  & $\hC$ &  LPML & Runtime (s) & ESS/s   \\ \hline\hline
		\citet{bianchini2018determinantal} &    6.00 &    1.218 &       7 & -207.94 &   600.4 & 0.02 \\\hline
		MRMM, $\ R=0$ (no repulsion) &    7.69 &    4.082 &       6 & -210.13 &   772.9 & 0.83 \\\hline
		MRMM, $\ R=5$ &    3.37 &    0.305 &       3 & -212.05 &   448.2 & 4.50\\\hline
		MRMM, $\ R\sim\dGamma{4, 2}$ &    5.51 &    0.934 &       6 & -208.83 &  501.2 & 0.03\\\hline
	\end{tabular}
	\caption{
		Posterior summaries for the Galaxy dataset inferred with hardcore MRMM. 
      }
	\label{tbl:galaxy-hard}
    }
\end{table*}

The left-most panel in~\Cref{fig:galaxy-hard} dispays the mean posterior density for MRMM without any repulsion. 
The posterior mean number of components is around 8, with a relatively large variance of 4. The two rightmost panels show the corresponding densities for MRMM (with the thinning radius learnt), and the model of~\citet{bianchini2018determinantal}. 
Both have about 6 components, though the predictive performance is not significantly different from the model without repulsion.
By forcing the thinning radius to $5$, the components around the origin merge into a single component, 
with noticeable, but not large drop in performance.
With a Gamma %
prior on $R$, we get a posterior mean $\EE\sgiven R\XX = 1.54$ and variance $\Var\rgiven R{\XX} = 0.5305$, with the posterior mean of the number of components about $5.5$.
We report comparable CPU run times and ESS/s of both methods in \Cref{tbl:galaxy-hard}.

\vspace{-.1in}
\section{Discussion} \label{sec:conc}
In this paper, we described a novel approach to repulsive mixture modeling through the \matern type-III repulsive point process. The advantages of our approach include its mechanistic nature, which allows easy extension to different kinds of repulsion, as well as the simplicity and efficiency of the  associated MCMC sampling algorithm. %
While we only considered repulsion between component locations, it is also of interest to consider repulsion between variances or even cluster weights. %
From a theoretical viewpoint, better understanding the effect of the \matern kernel parameters on the %
repulsive behavior of our model will provide practitioners with an additional tool for model selection. 
It is also of interest to investigate asymptotic properties such as posterior consistency and convergence rates of this class of repulsive mixture models. %
Another extension to %
high-dimensional clustering applications (e.g. through random projections), into more general models such as latent feature models or time series models such as self-avoiding Markov models.
Finally, it is of interest to apply these models to new applications and problems.

\bibliographystyle{abbrvnat}
\bibliography{Matern.bib}

\begin{thebibliography}{36}
\providecommand{\natexlab}[1]{#1}
\providecommand{\url}[1]{\texttt{#1}}
\expandafter\ifx\csname urlstyle\endcsname\relax
  \providecommand{\doi}[1]{doi: #1}\else
  \providecommand{\doi}{doi: \begingroup \urlstyle{rm}\Url}\fi

\bibitem[Banfield and Raftery(1993)]{banfield1993model}
J.~D. Banfield and A.~E. Raftery.
\newblock {Model-based Gaussian and non-Gaussian clustering}.
\newblock \emph{Biometrics}, pages 803--821, 1993.

\bibitem[Bensmail et~al.(1997)Bensmail, Celeux, Raftery, and Robert]{bensmail1997inference}
H.~Bensmail, G.~Celeux, A.~E. Raftery, and C.~P. Robert.
\newblock Inference in model-based cluster analysis.
\newblock \emph{Statistics and Computing}, 7\penalty0 (1):\penalty0 1--10, 1997.

\bibitem[Beraha et~al.(2022)Beraha, Argiento, M{\o}ller, and Guglielmi]{beraha2022mcmc}
M.~Beraha, R.~Argiento, J.~M{\o}ller, and A.~Guglielmi.
\newblock Mcmc computations for bayesian mixture models using repulsive point processes.
\newblock \emph{Journal of Computational and Graphical Statistics}, pages 1--14, 2022.

\bibitem[Beraha et~al.(2025)Beraha, Argiento, Camerlenghi, and Guglielmi]{beraha25}
M.~Beraha, R.~Argiento, F.~Camerlenghi, and A.~Guglielmi.
\newblock Bayesian mixture models with repulsive and attractive atoms.
\newblock \emph{Journal of the Royal Statistical Society Series B: Statistical Methodology}, 87\penalty0 (5):\penalty0 1481--1507, 05 2025.
\newblock ISSN 1369-7412.
\newblock \doi{10.1093/jrsssb/qkaf027}.

\bibitem[Berman et~al.(2002)Berman, Battistuz, Bhat, Bluhm, Bourne, Burkhardt, Feng, Gilliland, Iype, Jain, et~al.]{berman2002protein}
H.~M. Berman, T.~Battistuz, T.~N. Bhat, W.~F. Bluhm, P.~E. Bourne, K.~Burkhardt, Z.~Feng, G.~L. Gilliland, L.~Iype, S.~Jain, et~al.
\newblock The protein data bank.
\newblock \emph{Acta Crystallographica Section D: Biological Crystallography}, 58\penalty0 (6):\penalty0 899--907, 2002.

\bibitem[Bianchini et~al.(2018)Bianchini, Guglielmi, Quintana, et~al.]{bianchini2018determinantal}
I.~Bianchini, A.~Guglielmi, F.~A. Quintana, et~al.
\newblock Determinantal point process mixtures via spectral density approach.
\newblock \emph{Bayesian Analysis}, 2018.

\bibitem[Dasgupta(1999)]{dasgupta1999learning}
S.~Dasgupta.
\newblock {Learning mixtures of Gaussians}.
\newblock In \emph{40th Annual Symposium on Foundations of Computer Science (Cat. No. 99CB37039)}, pages 634--644. IEEE, 1999.

\bibitem[Devroye(1986)]{devroye1986non}
L.~Devroye.
\newblock Non-uniform random variate generation, 1986.

\bibitem[Fr{\"u}hwirth-Schnatter and Malsiner-Walli(2019)]{fruhwirth2019here}
S.~Fr{\"u}hwirth-Schnatter and G.~Malsiner-Walli.
\newblock {From here to infinity: sparse finite versus Dirichlet process mixtures in model-based clustering}.
\newblock \emph{Advances in Data Analysis and Classification}, 13\penalty0 (1):\penalty0 33--64, 2019.

\bibitem[F{\'u}quene et~al.(2019)F{\'u}quene, Steel, and Rossell]{fuquene2019choosing}
J.~F{\'u}quene, M.~Steel, and D.~Rossell.
\newblock On choosing mixture components via non-local priors.
\newblock \emph{Journal of the Royal Statistical Society: Series B}, 81\penalty0 (5):\penalty0 809--837, 2019.

\bibitem[Hough et~al.(2006)Hough, Krishnapur, Peres, Vir{\'a}g, et~al.]{hough2006determinantal}
J.~B. Hough, M.~Krishnapur, Y.~Peres, B.~Vir{\'a}g, et~al.
\newblock Determinantal processes and independence.
\newblock \emph{Probability surveys}, 3:\penalty0 206--229, 2006.

\bibitem[Huber and Wolpert(2009)]{huber2009likelihood}
M.~L. Huber and R.~L. Wolpert.
\newblock {Likelihood-based inference for Mat{\'e}rn type-III repulsive point processes}.
\newblock \emph{Advances in Applied Probability}, 41\penalty0 (4):\penalty0 958--977, 2009.

\bibitem[Kingman(1992)]{kingman1992poisson}
J.~Kingman.
\newblock \emph{Poisson Processes}.
\newblock Oxford Studies in Probability. Clarendon Press, 1992.
\newblock ISBN 9780191591242.

\bibitem[Lau and Green(2007)]{lau2007bayesian}
J.~W. Lau and P.~J. Green.
\newblock Bayesian model-based clustering procedures.
\newblock \emph{Journal of Computational and Graphical Statistics}, 16\penalty0 (3):\penalty0 526--558, 2007.

\bibitem[Lavancier et~al.(2015)Lavancier, M{\o}ller, and Rubak]{lavancier2015determinantal}
F.~Lavancier, J.~M{\o}ller, and E.~Rubak.
\newblock Determinantal point process models and statistical inference.
\newblock \emph{{Journal of the Royal Statistical Society - Series B}}, pages 853--877, 2015.

\bibitem[Lewis and Shedler(1979)]{lewis1979simulation}
P.~W. Lewis and G.~S. Shedler.
\newblock {Simulation of nonhomogeneous Poisson processes by thinning}.
\newblock \emph{Naval Research Logistics Quarterly}, 26\penalty0 (3):\penalty0 403--413, 1979.

\bibitem[Lin et~al.(2017)Lin, Rao, and Dunson]{lin2017}
L.~Lin, V.~Rao, and D.~B. Dunson.
\newblock {Bayesian nonparametric inference on the Stiefel manifold}.
\newblock \emph{Statistica Sinica}, 27\penalty0 (2):\penalty0 535--553, 2017.
\newblock ISSN 10170405, 19968507.

\bibitem[Malsiner-Walli et~al.(2016)Malsiner-Walli, Fr{\"u}hwirth-Schnatter, and Gr{\"u}n]{malsiner2016model}
G.~Malsiner-Walli, S.~Fr{\"u}hwirth-Schnatter, and B.~Gr{\"u}n.
\newblock {Model-based clustering based on sparse finite Gaussian mixtures}.
\newblock \emph{Statistics and computing}, 26\penalty0 (1-2):\penalty0 303--324, 2016.

\bibitem[Mardia(1975)]{mardia1975statistical}
K.~V. Mardia.
\newblock Statistical of directional data (with discussion).
\newblock \emph{Journal of the Royal Statistical Society}, 37\penalty0 (3):\penalty0 390, 1975.

\bibitem[Mardia et~al.(2007)Mardia, Taylor, and Subramaniam]{mardia2007protein}
K.~V. Mardia, C.~C. Taylor, and G.~K. Subramaniam.
\newblock {Protein bioinformatics and mixtures of bivariate von Mises distributions for angular data}.
\newblock \emph{Biometrics}, 63\penalty0 (2):\penalty0 505--512, 2007.

\bibitem[Mat{\'e}rn(1960)]{matern1960}
B.~Mat{\'e}rn.
\newblock Spatial variation: stochastic models and their application to some problems in forest surveys and other sampling investigations.
\newblock \emph{Meddelanden fran statens Skogsforskningsinstitut}, 49:\penalty0 144, 1960.

\bibitem[Mat{\'e}rn(2013)]{matern1986}
B.~Mat{\'e}rn.
\newblock \emph{Spatial variation}, volume~36.
\newblock Springer Science \& Business Media, 2013.

\bibitem[McLachlan and Basford(1988)]{mclachlan1988mixture}
G.~J. McLachlan and K.~E. Basford.
\newblock \emph{Mixture Models: Inference and Applications to Clustering}, volume~38.
\newblock M. Dekker New York, 1988.

\bibitem[Petralia et~al.(2012)Petralia, Rao, and Dunson]{petralia2012repulMix}
F.~Petralia, V.~Rao, and D.~B. Dunson.
\newblock Repulsive mixtures.
\newblock In \emph{Advances in Neural Information Processing Systems}, pages 1889--1897, 2012.

\bibitem[Quinlan et~al.(2018)Quinlan, Page, and Quintana]{quinlan2018density}
J.~J. Quinlan, G.~L. Page, and F.~A. Quintana.
\newblock Density regression using repulsive distributions.
\newblock \emph{Journal of Statistical Computation and Simulation}, 88\penalty0 (15):\penalty0 2931--2947, 2018.

\bibitem[Ramachandran et~al.(1963)Ramachandran, Ramakrishnan, and Sasisekharan]{Ramachandranetal1963}
G.~N. Ramachandran, C.~Ramakrishnan, and V.~Sasisekharan.
\newblock Stereochemistry of polypeptide chain configurations.
\newblock \emph{J. mol. Biol}, 7:\penalty0 95--99, 1963.

\bibitem[Rao et~al.(2016)Rao, Lin, and Dunson]{rao2016data}
V.~Rao, L.~Lin, and D.~B. Dunson.
\newblock Data augmentation for models based on rejection sampling.
\newblock \emph{Biometrika}, 103\penalty0 (2):\penalty0 319--335, 2016.

\bibitem[Rao et~al.(2017)Rao, Adams, and Dunson]{rao2017matern}
V.~Rao, R.~P. Adams, and D.~D. Dunson.
\newblock Bayesian inference for {M}at{\'e}rn repulsive processes.
\newblock \emph{{Journal of the Royal Statistical Society - Series B}}, 79\penalty0 (3):\penalty0 877--897, 2017.

\bibitem[Richardson and Green(1997)]{richardson1997bayesian}
S.~Richardson and P.~J. Green.
\newblock {On Bayesian analysis of mixtures with an unknown number of components}.
\newblock \emph{{Journal of the Royal Statistical Society - Series B}}, 59\penalty0 (4):\penalty0 731--792, 1997.

\bibitem[Roeder(1990)]{roeder1990density}
K.~Roeder.
\newblock Density estimation with confidence sets exemplified by superclusters and voids in the galaxies.
\newblock \emph{Journal of the American Statistical Association}, 85\penalty0 (411):\penalty0 617--624, 1990.

\bibitem[Saraiva et~al.(2020)Saraiva, Suzuki, Milan, et~al.]{saraiva2020bayesian}
E.~F. Saraiva, A.~K. Suzuki, L.~A. Milan, et~al.
\newblock {A Bayesian sparse finite mixture model for clustering data from a heterogeneous population}.
\newblock \emph{Brazilian Journal of Probability and Statistics}, 34\penalty0 (2):\penalty0 323--344, 2020.

\bibitem[Scardicchio et~al.(2009)Scardicchio, Zachary, and Torquato]{scardicchio2009statistical}
A.~Scardicchio, C.~E. Zachary, and S.~Torquato.
\newblock {Statistical properties of determinantal point processes in high-dimensional Euclidean spaces}.
\newblock \emph{Physical Review E}, 79\penalty0 (4):\penalty0 041108, 2009.

\bibitem[Silverman(1986)]{silverman1986density}
B.~W. Silverman.
\newblock \emph{Density Estimation for Statistics and Data Analysis}, volume~26.
\newblock CRC press, 1986.

\bibitem[Stoyan et~al.(1987)Stoyan, Kendall, and Mecke]{stoyan1987stochastic}
D.~Stoyan, W.~S. Kendall, and J.~Mecke.
\newblock \emph{Stochastic Geometry and its Applications}.
\newblock John Wiley \& Sons, 1987.

\bibitem[Xie and Xu(2019)]{xie2019bayesian}
F.~Xie and Y.~Xu.
\newblock {Bayesian repulsive Gaussian mixture model}.
\newblock \emph{{Journal of the American Statistical Association}}, pages 1--29, 2019.

\bibitem[Xu et~al.(2016)Xu, M{\"u}ller, and Telesca]{xu2016bayesian}
Y.~Xu, P.~M{\"u}ller, and D.~Telesca.
\newblock Bayesian inference for latent biologic structure with determinantal point processes (dpp).
\newblock \emph{Biometrics}, 72\penalty0 (3):\penalty0 955--964, 2016.

\end{thebibliography}

\newpage

\begin{center}

{\large\bf SUPPLEMENTARY MATERIAL}

\end{center}

\setcounter{page}{1}
\setcounter{section}{0}
\setcounter{figure}{0}
\setcounter{table}{0}
\renewcommand{\thesection}{\Alph{section}}
\renewcommand{\thefigure}{S\arabic{figure}}
\renewcommand{\thetable}{S\arabic{table}}

\begin{center}
	{\Large\bf 
		Supplementary material for \\
		``Bayesian Repulsive Mixture Modeling with 
		\\ \matern Point Processes"}
\end{center}

These supplementary materials include the detailed proofs, algorithms and additional experiment results.

\section{Proofs}

\begin{theorem*}[\ref{prop:X-G-Gt}]
       Write $\scP_\lambda$ for the law of a rate-$\lambda(\cdot)$ Poisson process on $\Space\times \mathcal{M}$. Then the measure of the tuple $\XX$, $\G$, $\Gt$ has density with respect to ${d}x^n\times\scP_\lambda$ given by
	\begin{align*}
		\pgiven{\XX, \G, \Gt}{\Rate, \thin} & =
		\left(\frac{\ind{}(|\G \cup \Gt| > 0)}{1-e^{\int_{\Space}- \lambda(\s, \w, \t)\,\dif\s \dif\w \dif\t }}\right) \\
		&  \left( \prod_{\g\in\G}\sbr{1-\shadow\g\G}\prod_{\gt\in\Gt}\shadow\gt\G  \right)
		\left( \prod_{i=1}^{n}\sum_{(\s, \w, \t)\in\G}{\frac\w{\sum\G_\Weight}\px{x_i}{\s} } \right).
	\end{align*}	
\end{theorem*}
\begin{proof}
	First note that the set $\F = \G \cup \Gt$ follows a Poisson process with rate $\lambda(\s,\w,\t)$, conditioned to have at least 1 event. 
	The probability that such a Poisson process produces $1$ or more events is $1-\exp(-\int \lambda(\theta,\w, \t) d \theta  d\w d\t)$.
	It follows that conditioning on this event, $\F$ has density with respect to $\scP_\lambda$ given by the ratio in the first parentheses.
	Each element $f$ of $F$ is assigned to either $\G$ or $\Gt$, with probability 
	${1-\shadow\f\G}$ or $\shadow\f\G$ respectively. 
	This gives the terms in the second parentheses.
	Finally, the $i$th observation is assigned to cluster $(\s,\w,\t) \in G$ with probability $\w/\G_\Weight$, with its value having density $\px{x_i}{\s}$ with respect to $dx$.
	Marginalizing over cluster assignments, and considering all $n$ observations, we get the final terms.
	The result then follows easily from Lemma \ref{lem:poiss_density}.
\end{proof}

To prove~\Cref{prop:Gt}, we start with the following useful (and not new) result. Below, we give a less combinatorial and slightly more general proof than what~\citet{rao2017matern} used implicitly in their work: 
\begin{lemma} \label{lem:poiss_density}
	Consider two Poisson processes on some space $\cY$, with intensities $\lambda(y)$ and $\mu(y)$. Then the former has density with respect to the latter given by %
	\begin{align}
		\frac{d \scP_\lambda}{d \scP_\mu}(M) := p_\mu(M|\lambda) = e^{\int_{\cY} \mu(y) - \lambda(y)\dif y }\prod_{m\in M}\frac{\lambda(m)}{\mu(m)}%
	\end{align}
\end{lemma}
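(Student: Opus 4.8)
The plan is to realize both $\scP_\lambda$ and $\scP_\mu$ as densities against a common dominating measure on the space of finite point configurations, and then to obtain the claimed ratio via the chain rule for Radon--Nikodym derivatives. Concretely, I would fix a reference measure $L$ on configurations defined, for any symmetric test function $f$, by
\[
  \int f\,\dif L \;=\; \sum_{n=0}^\infty \frac1{n!}\int_{\cY^n} f\big(\{y_1,\dotsc,y_n\}\big)\,\dif y_1\dotsm \dif y_n,
\]
i.e.\ the superposition over all cardinalities $n$ of $1/n!$ times the base (Lebesgue) measure on ordered $n$-tuples, where the $1/n!$ accounts for the unordered nature of a configuration.

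First I would compute $\dif\scP_\lambda/\dif L$ using the two defining properties of a Poisson process: that the number of points $N$ is Poisson with mean $\Lambda=\int_\cY\lambda(y)\dif y$, and that, conditioned on $N=n$, the points are i.i.d.\ with density $\lambda(\cdot)/\Lambda$. The mass $\scP_\lambda$ places on configurations of size $n$ with points near $m_1,\dotsc,m_n$ is then
\[
  e^{-\Lambda}\frac{\Lambda^n}{n!}\prod_{i=1}^n\frac{\lambda(m_i)}{\Lambda}\,\dif m_1\dotsm \dif m_n
  \;=\; \frac{e^{-\Lambda}}{n!}\prod_{i=1}^n\lambda(m_i)\,\dif m_1\dotsm \dif m_n,
\]
whereas $L$ places mass $\frac1{n!}\dif m_1\dotsm\dif m_n$ on the same event. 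Cancelling the common $1/n!$ yields
\[
  \frac{\dif\scP_\lambda}{\dif L}(M) \;=\; e^{-\int_\cY\lambda(y)\dif y}\prod_{m\in M}\lambda(m),
\]
and the identical computation gives the analogous expression with $\mu$ in place of $\lambda$.

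Finally I would apply the chain rule, $\dif\scP_\lambda/\dif\scP_\mu = (\dif\scP_\lambda/\dif L)\big/(\dif\scP_\mu/\dif L)$, so the two reference factors combine to $e^{\int_\cY(\mu-\lambda)}$ and the products combine to $\prod_{m\in M}\lambda(m)/\mu(m)$, which is exactly the claimed formula. I expect the main points requiring care to be the bookkeeping of the $n!$ factors (which must appear consistently in both the Poisson law and in $L$ so that they cancel), and the absolute-continuity hypothesis: for $\dif\scP_\lambda/\dif\scP_\mu$ to exist one needs $\scP_\lambda\ll\scP_\mu$, which holds provided $\{\lambda>0\}\subseteq\{\mu>0\}$ up to a Lebesgue-null set (so that $\lambda/\mu$ is well defined on the support), together with finiteness of the total intensities so that all exponentials are finite. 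Under the paper's standing assumption of a strictly positive, integrable intensity these conditions hold automatically, so no additional hypotheses are needed.
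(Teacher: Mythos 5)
Your proof is correct, but it takes a genuinely different route from the paper. The paper proves the lemma via Campbell's theorem: it computes the Laplace functional $\EE[\exp(h(M))]$ of the point process whose density with respect to $\scP_\mu$ is the claimed expression, shows this functional equals $\exp\big(\int(e^{h(y)}-1)\lambda(y)\,\dif y\big)$, i.e.\ the Laplace functional of a rate-$\lambda$ Poisson process, and concludes by uniqueness of Laplace functionals that the two measures coincide. You instead construct an explicit common dominating measure $L$ on finite configurations, derive the Janossy-type density $e^{-\int\lambda}\prod_{m\in M}\lambda(m)$ of each Poisson process with respect to $L$ from the elementary characterization (Poisson count, i.i.d.\ locations), and take the ratio via the chain rule. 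Your approach is more concrete and has the merit of making the absolute-continuity hypothesis ($\{\lambda>0\}\subseteq\{\mu>0\}$ a.e., finite total intensities) explicit, which the paper leaves implicit; the cost is that the ``mass near $m_1,\dotsc,m_n$'' step is a heuristic that a fully rigorous writeup would replace by integrating symmetric test functions against both sides (the $n!$ bookkeeping you flag is exactly where this matters). The paper's transform argument sidesteps the construction of $L$ entirely and verifies the identity in one computation, at the price of invoking Campbell's theorem and uniqueness of the Laplace functional as black boxes. Both proofs are sound and essentially interchangeable here.
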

\begin{proof}
	Consider a function $h:\cY \rightarrow \Re$.
	For a point process $M$ on $\cY$, we overload notation, and define the linear functional $h(M) = \sum_{m \in M} h(m)$.
	Write $\EE_\scM[h(M)]$ for the expectation of $h(M)$ when $M$ is distributed as a point process with measure $\scM$.
	Recall that $\scP_\lambda$ corresponds to a rate-$\lambda(\cdot)$ Poisson process on $\cY$, and $\scP_\mu$, to a rate-$\mu(\cdot)$ Poisson process. %
	We first note that from Campbell's theorem~\citep{kingman1992poisson}, for a rate-$\mu(\cdot)$ Poisson process, we have %
	\begin{align}
		\label{eq:laplace}
		\EE_{\scP_\mu}[\exp(h(M))] 
		= \EE_{\scP_\mu}\sbr{\exp\bigg(\sum_{m \in M} h(m)\bigg)} 
		= \exp \left( \int (e^{h(y)}-1) \mu(y)\dif y  \right).
	\end{align}
	Now write $\scM^\lambda_\mu$ for the probability measure of a point process with density $p_\mu(M|\lambda)$ with respect to a rate-$\mu(\cdot)$ Poisson process. Then %
	\begin{align}
		\EE_{\scM^\lambda_\mu}[\exp(h(M))] 
		& = \EE_{\scP_\mu}\left[p_\mu(M|\lambda) \exp(h(M))\right]  \nonumber \\
		& = \EE_{\scP_\mu}\left[e^{\int_{\mathcal{Y}} \left(\mu(y) - \lambda(y)\right)\dif y }\left( \prod_{m \in M}\frac{\lambda(m)}{\mu(m)} \right) \exp(h(M))\right]  \nonumber \\
		& = e^{\int_{\mathcal{Y}} \left(\mu(y) - \lambda(y)\right)\dif y } \ \EE_{\scP_\mu}\left[\exp\sum_{m \in M} \left(h(m) + \log {\lambda(m)} - \log{\mu(m)} \right)\right] \nonumber \\
		& =  \exp \left( \int_{\mathcal{Y}} (e^{h(y)}-1) \lambda(y)\dif y \right) \quad \text{\ (from~\cref{eq:laplace})} \nonumber \\
		& =  \EE_{\scP_\lambda}[\exp(h(M))].
	\end{align}
	This confirms that $\scM^\lambda_\mu$ equals $\scP_\lambda$ a.e., proving our result.
\end{proof}

\begin{prop*}[\ref{prop:Gt}]
	Given all other variables, %
	the conditional distribution of the thinned events $\Gt$ is a Poisson process with intensity
	${\rate\cdot \shadow\cdot{G}}$.
\end{prop*}
\begin{proof}
	With respect to a rate-$\lambda(\cdot)$ Poisson process,
	\begin{align}
		p(\Gt|-) & \propto  \pgiven{\G, \Gt, \XX}{\Rate, \thin} \nonumber \\
		& = \left(\frac{\ind{}(|\G \cup \Gt| > 0)}{1-e^{\int_{\Space}- \lambda(\s, \w, \t)\dif\s \dif\w \dif\t }}\right) \prod_{\g\in\G}\sbr{1-\shadow\g\G}\prod_{\gt\in\Gt}\shadow\gt\G \nonumber\\
		&  \propto \prod_{\gt\in\Gt} \shadow\gt\G. \nonumber
	\end{align} 
	In the last equation, we dropped all terms that do not depend on $\Gt$, and used the fact that since $|G|>0$, $\ind{}(|\G \cup \Gt| > 0)$.
	The result now follows from Lemma \ref{lem:poiss_density}.
\end{proof}

\section{Additional Figures}

\begin{figure}[H]
	\includegraphics[width=\linewidth]{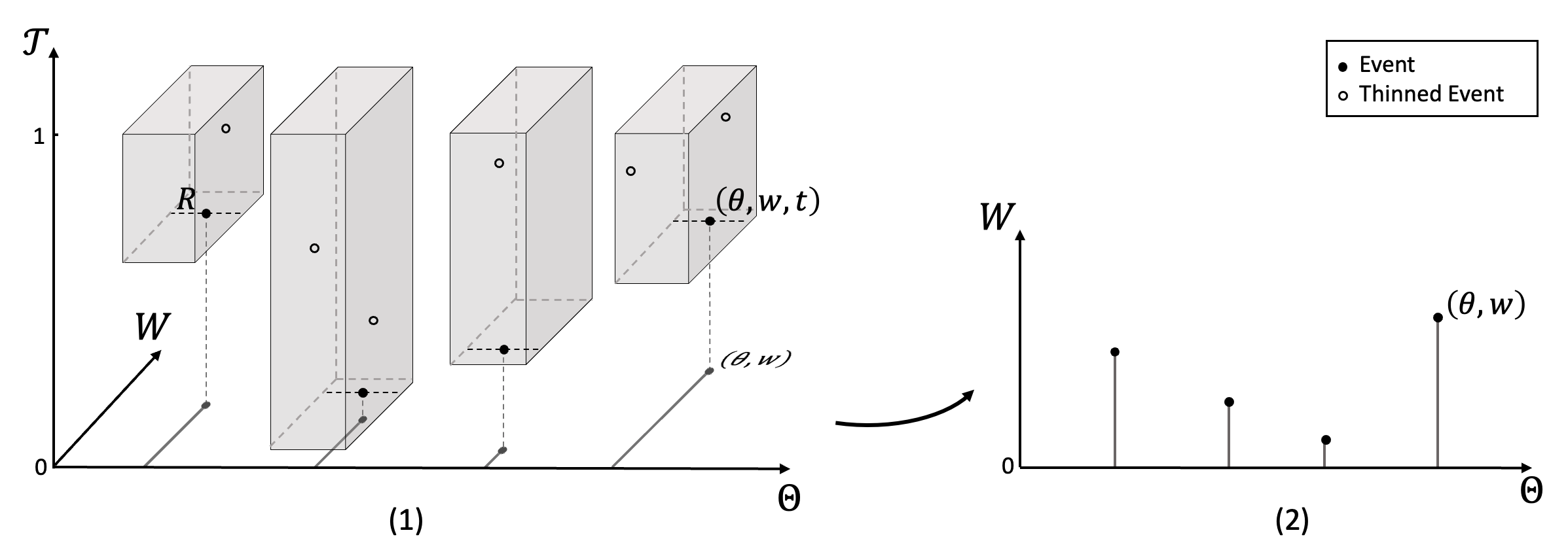}
	\caption{Illustration of the \matern prior for mixture models. 
		\textbf{(1)} Primary Poisson events $\F=\cbr{(\s_1, \w_1, \t_1), \dotsc, (\s_\CF, \w_\CF, \t_\CF)}$ thinned by a hardcore thinning kernel with thinning radius $R$. 
		The surviving events are projected to the parameter space of the mixture model $\Location\times\Weight$. 
		\textbf{(2)} The resulting mixture model, consisting of a collection of mixture component parameters $\s\in\Location$ and their corresponding unnormalized mixture weights $\w\in\Weight$. }
	\label{fig:model}
\end{figure}

\newpage
\section{Algorithms}

\begin{algorithm}
	\caption{Details of the function $\maternThin{\F, \thin}$} \label{alg:matern_thin}
	
	\DontPrintSemicolon
	\SetKwProg{Fn}{Function}{:}{}
	\SetKwInOut{Input}{Input}
	\SetKwInOut{Output}{Output}
	
	\nonl\Fn{$\maternThin{\F, \thin}$}{  %
		\Input{Extended primary Poisson process $\F$ and thinning kernel $\Kernel_\thin$}
		\Output{Extended \matern events $\G$ and thinned events $\Gt$}
		\BlankLine		
		
		Write $\overrightarrow{\F} = \rbr{\f_1,\dotsc,\f_{|\F|}}$ for $\F$ sorted in ascending order of birth times (so that $\proj{\Time}(\f_j) < \proj{\Time}(\f_{j'})$ if $j < j'$). %
		
		\For{$j \gets 1$ \KwTo $\CF$}{
			Set $(\s,\t) \gets \rbr{\proj{\Location}(\f_j), \proj{\Time}(\f_j)}$ \;
			Draw $u\sim$ Unif$[0, 1]$\; 
			\eIf(\tcp*[f]{Assign $\f_j$ to $\G$ w.p.\,$\shadow{(\s, \t)}\G$}){$u < \shadow{(\s, \t)}\G$}{
				$\G\gets\G\cup \f_j$
			}{
				$\Gt\gets\Gt\cup \f_j$
			}
		}
	}
	\Return{$\G$, $\Gt$}
\end{algorithm}

\begin{algorithm}
	\caption{The relabeling step to update \matern events $\G$} \label{alg:relabel}
	
	\DontPrintSemicolon
	\SetKwProg{Fn}{Function}{:}{}
	\SetKwFunction{mt}{{\matern}Thin}
	\SetKwFunction{rl}{Relabel}
	\SetKwFunction{sf}{RandomShuffle}
	\SetKwInOut{Input}{Input}
	\SetKwInOut{Output}{Output}
	\SetKw{Next}{next}
	
	\nonl\Fn{\rl{$\Rate$, $\au$, $\G$, $\Gt$, $\XX$}}{
		\Input{Primary Poisson intensity $\Rate$, augmentation factor $\au$, current state of the surviving events $\G$ and the thinned events $\Gt$, the data $\XX$. }
		\Output{Updated \matern events $\G$ and thinned events $\Gt$.}
		\BlankLine
		
		Sample augmented $\Ft\sim\poisP{\au\Rate(\cdot)}$\;
		Impute non-locational parameters of $\Gt$ from the prior (if presents in the model)\;
		Obtain shuffled indices $J$ = \sf{$\{1, \dots, \card{\G\cup\Gt\cup\Ft}\}$}\;
		Compute likelihood related objects: $n\times\card{J}$ matrix $L=\rbr{\w_j\px{x_i}{\s_j}: i, j}$ and $n$-dim vector $\ll=\big(\sum_{\g\in\G}l_1^\g, \dots, \sum_{\g\in\G}l_n^\g\big)$\;
		Compute the normalizing constant $\Sw=\sum\G_\Weight$\;
		\ForEach{$j$ in $J$}{
			\If{event $j$ in $\G$}{
				\eIf(\tcp*[f]{$G$ contains only event $j$}){$\CG = 1$}{
					\Next
				}{
					$\Sw \gets\Sw - \w_j$\;
					$\ll\gets\ll-L_{\cdot j}$
				}
			}
			Remove event $j$ from its original event set\;
			Assign event $j$ to $\G$, $\Gt$ or $\Ft$ with probability $P(e \in \G| -)$, $P(e \in \Gt| -)$ and $P(e \in \Ft| -)$ in \cref{eqn:relabel-posts}, respectively,\;
		}
		\Return{$\G$, $\Gt$}
	}
\end{algorithm}

\begin{figure}
	\includegraphics[width=\linewidth]{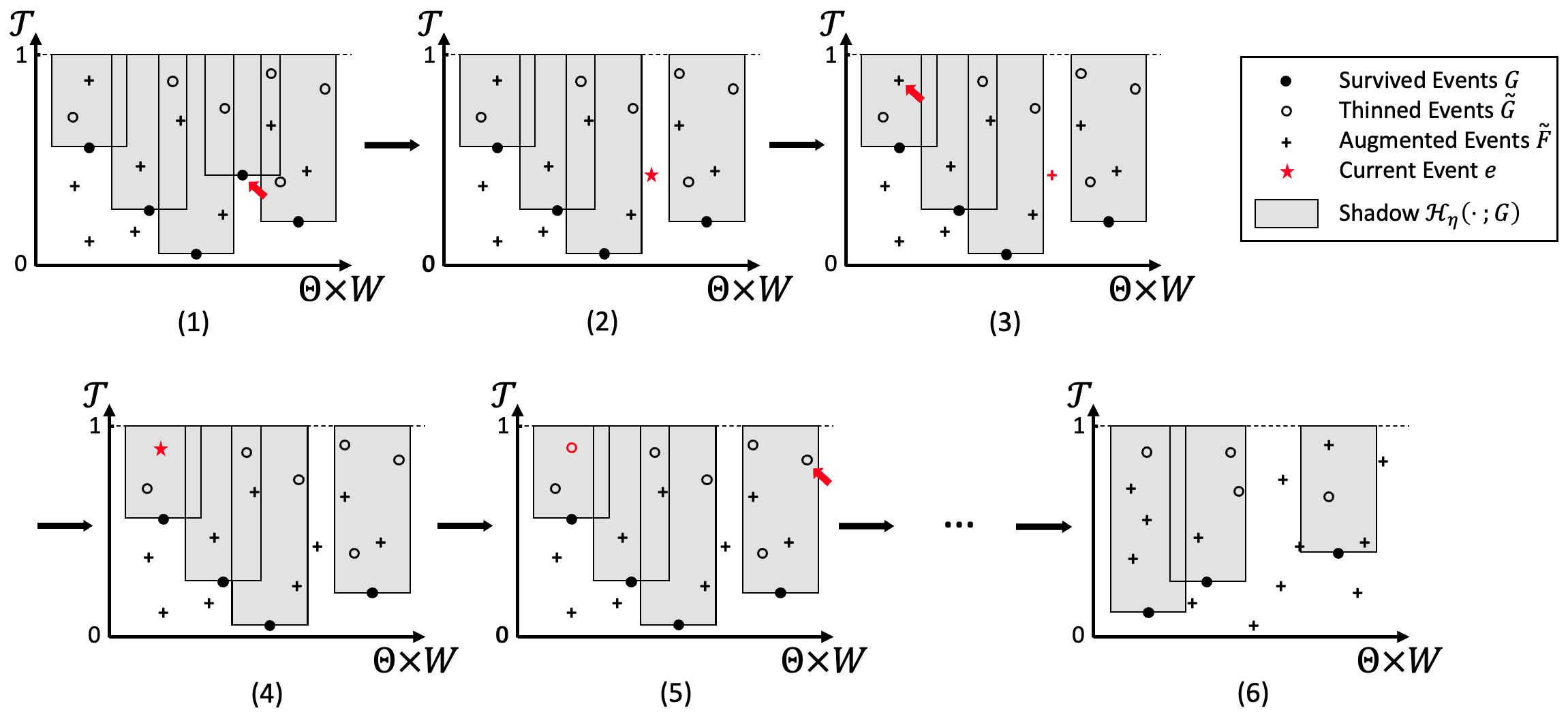}
	\caption{
		Illustration of the relabeling step.  
		\textbf{(1)} Before relabeling, the state of the surviving events $\G$, thinned events $\Gt$ and auxiliary events $\Ft$, and the shadow cast by $\G$, $\shadow\cdot\G$. 
		\textbf{(2-3)} The first event (After random shuffling of all events in $\G\cup\Gt\cup\Ft$) is relabeled as ``auxiliary". 
		The event is first removed from its original set $\G$ (and the shadow is affected accordingly) in \textbf{(2)}. 
		Then, in \textbf{(3)}, it is relabeled as ``auxiliary" according to the posterior conditional probabilities in \cref{eqn:relabel-posts}. Notice that with the hardcore thinning kernel, it is impossible for the event to be relabeled to ``thinned", as it is not under the shadow of a previously surviving event. 
		\textbf{(4-5)} The second event is relabeled as ``thinned". 
		Similarly, the event is removed from the collection of augmented events $\F$ in \textbf{(4)} and then relabeled as ``thinned" in \textbf{(5)}. Notice that it is under the shadow of a surviving event, and hence, with the hardcore thinning kernel, it can only be labeled as ``thinned" or ``auxiliary".
		\textbf{(6)} The final state for $\G$, $\Gt$, $\Ft$, after all events are relabeled. 
	}
	\label{fig:relabel}
\end{figure}

\begin{algorithm}
	\caption{Bayesian inference of MRMM} \label{alg:inference}
	
	\DontPrintSemicolon
	\SetKwFunction{rl}{Relabel}
	\SetKwInOut{Input}{Input}
	\SetKwInOut{Output}{Output}
	
	\Input{Data $\XX=\cbr{x_1, \dots, x_n}$, number of MCMC iterations $M$, model of cluster components $\px\cdot\s$, augmentation factor $\au$, prior on cluster locations $p_\s$, shape parameter of the Gamma prior on weights $\aw$, shape and rate parameter of the Gamma prior on mean intensity $(a, b)$, and prior on thinning kernel parameter $\pthin$.}
	\Output{Posterior samples of mean intensity $\mRate$, thinning parameter $\thin$, \matern events $\G_\Location$, $\G_\Time$, $\G_\Weight$, thinned events $\Gt$, and cluster assignments $\zz$. }
	\BlankLine
	
	Initialize $\mRate\sim\dGamma{a, b}$, $\thin\sim\pthin$\;
	Initialize $\G, \Gt\sim\maternP{\Rate, \Kernel_\thin}$\;
	Initialize $\zz$ from $\given\zz{\XX, \G}$: $\z_i\sim\multinomial{\w_j\cdot\px{X_i}{\s_j},~j=1,\dots,\CG}$\;
	\For{$m\gets1$ \KwTo $M$}{
		Update $\mRate$ according to $\frac1{1-e^{-\mRate}}\dGamma{a+\CF,b+1}$ using Metropolis-Hastings\;%
		Update $\thin$ according to $p({\thin}\vbar{\G, \Gt})$\;
		Update $\Gt$: (Poisson thinning) simulate from $\poisP{\Rate}$ and discard event $\gt$ with probability $1-\shadow{\gt}\G$\;
		Update $\tt$ one at a time according to \cref{eqn:tt}\;
		Update $\ww\gets \Sw\cdot\nww$ where $\Sw\sim\dGamma{\CG\aw, 1}$ and $\nww\sim\dirichlet\rbr{\aw+n_1, \dots, \aw + n_\CG}$ ($n_j=\sum_{i=1}^n \ind{}(\z_i=j))$\;
		Update $\ss$ one at a time according to \cref{eqn:ss} using Metropolis-Hastings\; %
		$\G, \Gt\gets$\rl{$\Rate$, $\au$, $\G$, $\Gt$, $\XX$}\;
		Update $\zz$ one at a time: $\z_i\sim\multinomial{\w_j\cdot\px{X_i}{\s_j}, j=1,\dotsc,\CG}$\;
	}
	\Return{Posterior MCMC samples of $\mRate$, $\thin$, $\G$, $\Gt$ and $\zz$}
\end{algorithm}

\newpage
\section{Additional Experimental Results}

\begin{table*}
	\centering
	\begin{tabular}{|l|ll|l|}
		\hline
		Thinning Kernel     & \multicolumn{2}{l|}{Thinning Parameter}   & Expression                                                \\\hline\hline
		Hardcore            & $\thin=R$       & Radius $R>0$            & $\kernelR\s{\s'}=\ind{\|\s-\s'\|<R}$                      \\\hline      
		Probabilistic       & $\thin=(R, p)$  & Radius $R>0$            & $\kernelRp\s{\s'}=p\ind{\|\s-\s'\|<R}$                    \\
		&                 & Probability $p\in[0,1]$ &                                                           \\\hline
		Squared-exponential & $\thin=l$       & Lengthscale $l>0$       & $\kernelSp\s{\s'}l=\exp\cbr{-\frac{\|\s-\s_j\|^2}{2 l}} $ \\\hline
	\end{tabular}
	\caption{Thinning kernels used in experiments}
	\label{tbl:kernels}
\end{table*}

\subsection{Effect of augmentation factor $\au$ on MCMC efficiency}
\label{sec:exp-au}

\begin{figure}[H]
	\includegraphics[width=\linewidth]{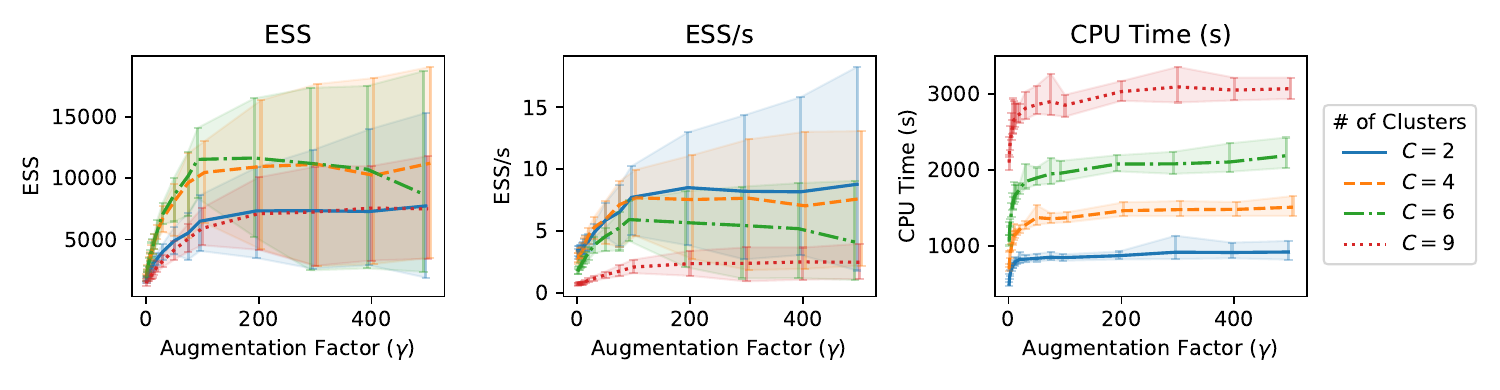}
	\caption{
		The impact of augmentation factor on \textbf{(left)} MCMC mixing (ESS out of 20,000 iterations), \textbf{(middle)} MCMC mixing rate (ESS/s) and \textbf{(right)} computational cost (CPU time). A tiny perturbation is added to $\au$'s to ensure visibility.}
	\label{fig:syn-au-2}
\end{figure}
We focus here on MRMM with hardcore thinning, the most challenging setting for MCMC mixing.
We applied MRMM to synthetic data generated from two-dimensional Gaussian mixture models, with minimum component separation of $4.0$ and with varying number of components (see the supplementary material for more details).
For each model, we simulated 50 training datasets, each consisting of 20 observations per component.  
The number of components $\C$ thus quantifies both model complexity and dataset size. 
We modeled each dataset as a hardcore MRMM with the thinning radius fixed to 2.
The covariance of each component was set to the $2\times2$ identity matrix $I_2$, and the normalized intensity $p_\Theta(\s)$  was set to $N(\mathbf{0}, 10I_2)$.
For each dataset, we ran our MCMC sampler for 20,000 iterations, with $\au$ ranging from $1$ to $500$. %

\Cref{fig:syn-au-2} plots the raw ESS (left), ESS/s (center) and  CPU run-time (right) against the augmentation factor $\gamma$, with each curve representing a different generative model. 
The right panel shows that, as expected, increasing $\gamma$ results in an increase in CPU time, as the number of events in the augmentation Poisson process increases.
At the same time, the leftmost panel shows that this added computational cost comes with the benefit of faster mixing, as more augmented Poisson events more easily allows events to be switched into and out of the \matern events $G$.
For small values for $\gamma$, this improvement is significant, before plateauing out as $\gamma$ crosses $50$.
The middle panel shows that this improvement easily compensates for the added computational burden. 
We see similar results for other thinning kernels, but do not include them.
In practice, based on these results, we recommend setting $\gamma$ somewhere in the range of $5$ to $10$. In the rest of our experiments, we fix it to $5$.

\subsection{Synthetic experiments} 

In this section, we evaluate MRMM and the associated MCMC sampling algorithm on a number of synthetic tasks.
\Cref{sec:exp-au2} provides more details and additional results on the study of augmentation factor $\au$ in \Cref{sec:exp-au}, while \Cref{sec:exp-syn} compares different thinning kernels and thinning strengths on the same synthetic datasets. \Cref{sec:eb2} provides additional experimental results on the choice of thinning parameters.

\subsubsection{Additional results for \Cref{sec:exp-au}} \label{sec:exp-au2}
The models to generate the datasets are illustrated in \Cref{fig:syn-au-data}. \Cref{fig:syn-au-mixing} visualizes assessments for the mixing of one run. 

\begin{figure} 
	\includegraphics[width=\linewidth]{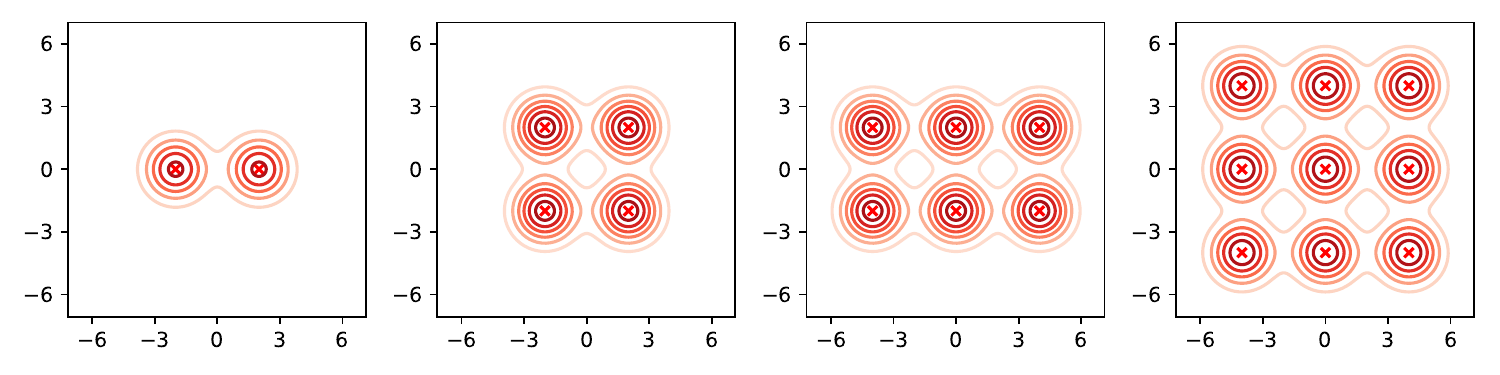}
	\caption{
		Mixtures of equally weighted Gaussian distributions for the study of augmentation factor $\au$ in \Cref{sec:exp-au}. 
		From left to right, number of clusters $\C=2, 4, 6, 9$, respectively. 
		Each cluster is a standard bivariate Gaussian with covariance being the $2\times2$ identity matrix $I_2$. 
		The minimum distances between cluster centers is 4.		
	}
	\label{fig:syn-au-data}
\end{figure}

\begin{figure}
	\includegraphics[width=\linewidth]{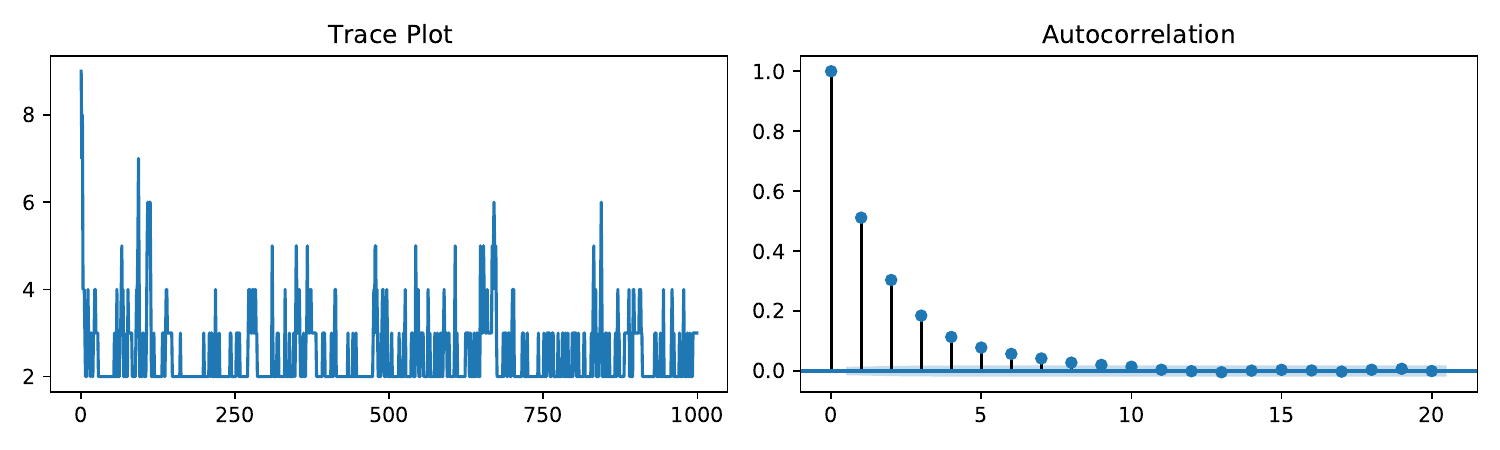}
	\caption{
		Visualization for assessing mixing of posterior number of clusters $\CG$ in one run with augmentation factor $\au=5$ on the dataset with two clusters. 
		In this run, ESS = 5624; ESS/s = 6.97; CPU Time (s) = 806.80.
		\textbf{(Left)} The trace plot of the first 1,000 updates of $\CG$. 
		\textbf{(Right)} The autocorrelation function of posterior samples of $\CG$.}
	\label{fig:syn-au-mixing}
\end{figure}

\subsubsection{Study of thinning kernels and thinning strengths} \label{sec:exp-syn}
Having established that our MCMC sampler mixes well, we now proceed to study the effect of different thinning kernels and thinning strengths on MRMM inferences. 
\Cref{tbl:kernels} lists all thinning kernels are corresponding parameters used in this study, specifically, for the probabilistic thinning kernel, the thinning probability $p=0.95$.

We consider a series of two-dimensional Gaussian mixture models shown in \Cref{fig:syn-data-all}. 
Each model consists of four equally weighted, unit-variance Gaussian components, located at $(-d/2, 3d/2)$, $(d/2, d)$, $(d, -d)$, $(-3d/2, -3d/2)$, where $d=1,2,3,4$ quantifies the separation level.
A training dataset of size 200 and a test data with 100 observations were simulated independently for each model. 

For MRMM, we set the prior $p_\Theta(\s)$ to a Gaussian with mean zero and covariance $10 I_2$. 
We placed an inverse-Wishart prior with 2 degrees of freedom and a scale matrix $I_2$ on the covariances. 
When learning the thinning strength (thinning radius $R$ for both hardcore and probabilistic MRMM, or the lengthscale $l$ for the squared-exponential MRMM), we placed a $\dGamma{4, 2}$ prior with mean 2.0 and variance 1.0.
All results were obtained from 2,000 iterations of MRMM after discarding the first 1,000 samples as burn-in.

\Cref{fig:syn-hard-all}, \ref{fig:syn-soft-all} and \ref{fig:syn-rbf-all} are the inferred posterior contours and the `median' clustering results obtained with the three kernels.
Heatmaps in \Cref{fig:syn-sum-EC,fig:syn-sum-VarC,fig:syn-sum-hC,fig:syn-sum-lnpTest-diff,fig:syn-sum-LPML} compare the parsimony and the goodness-of-fit of different thinning kernels with a variety of thinning strengths. 
As expected, increasing repulsion strength results in greater parsimony, with both the posterior mean and variance of the number of clusters dropping. 
Interestingly, moderate values of repulsion do not significantly harm the model fit.  
However, a strong repulsion strength does result in a drop in predictive power, especially for the hardcore MRMM.

\begin{figure}
	\centering
	\includegraphics[width=\linewidth]{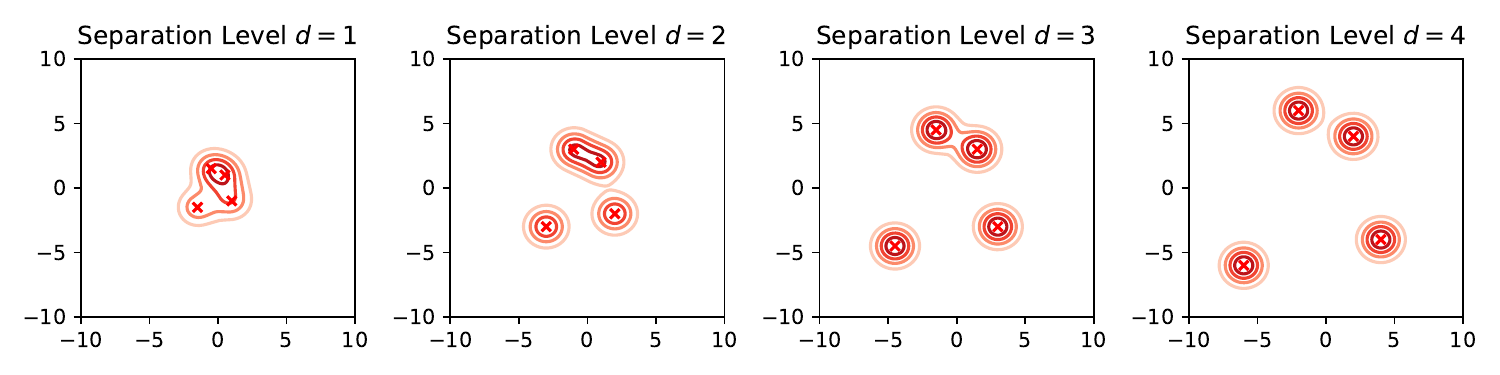}
	\caption{
		The ground truth model $M_0$ with different separation levels. 
	}
	\label{fig:syn-data-all}
\end{figure}

\begin{figure}
	\centering
	\includegraphics[width=\linewidth]{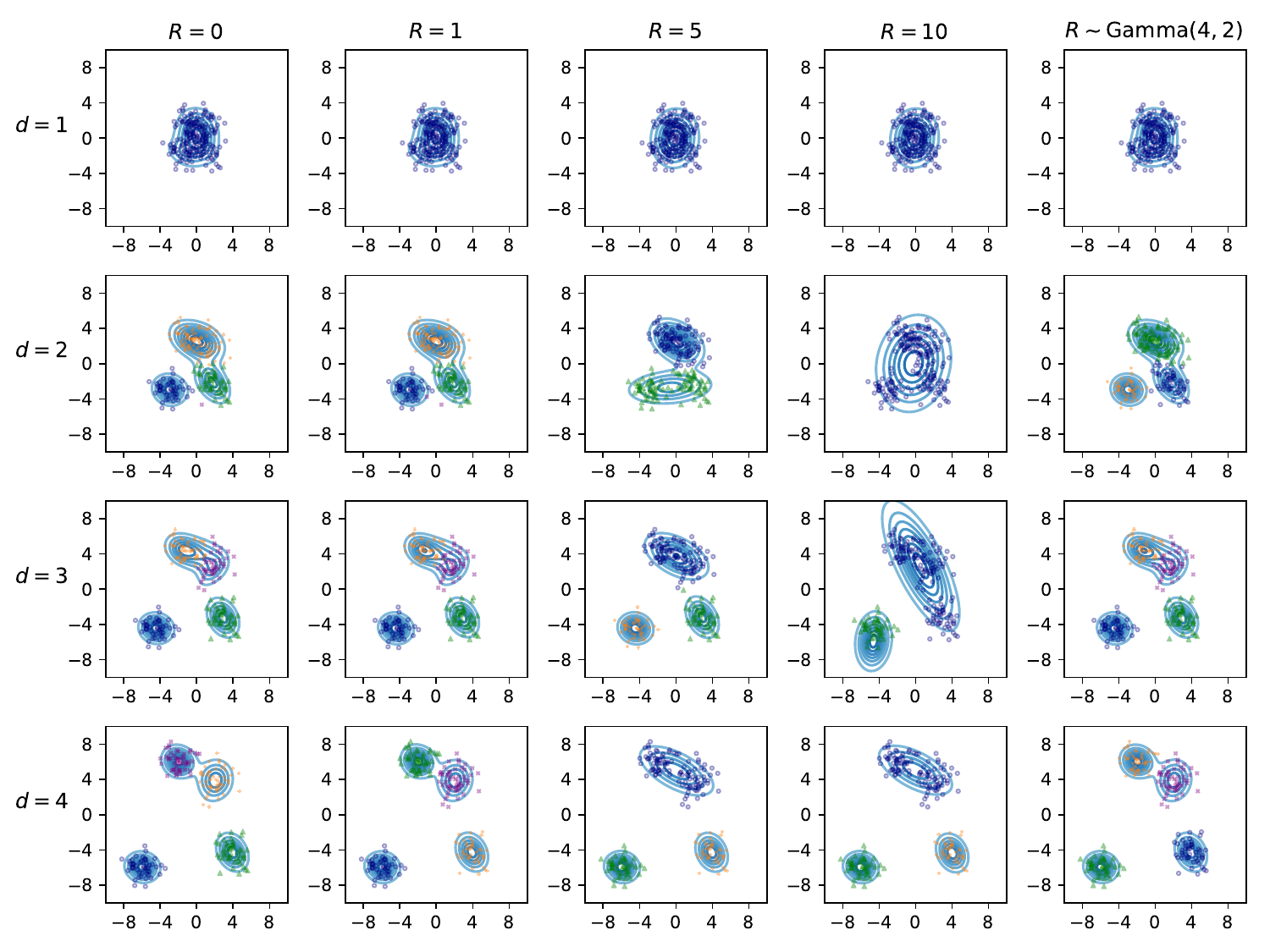}
	\caption{
		Hardcore MRMM
	}
	\label{fig:syn-hard-all}
\end{figure}

\begin{figure}
	\centering
	\includegraphics[width=\linewidth]{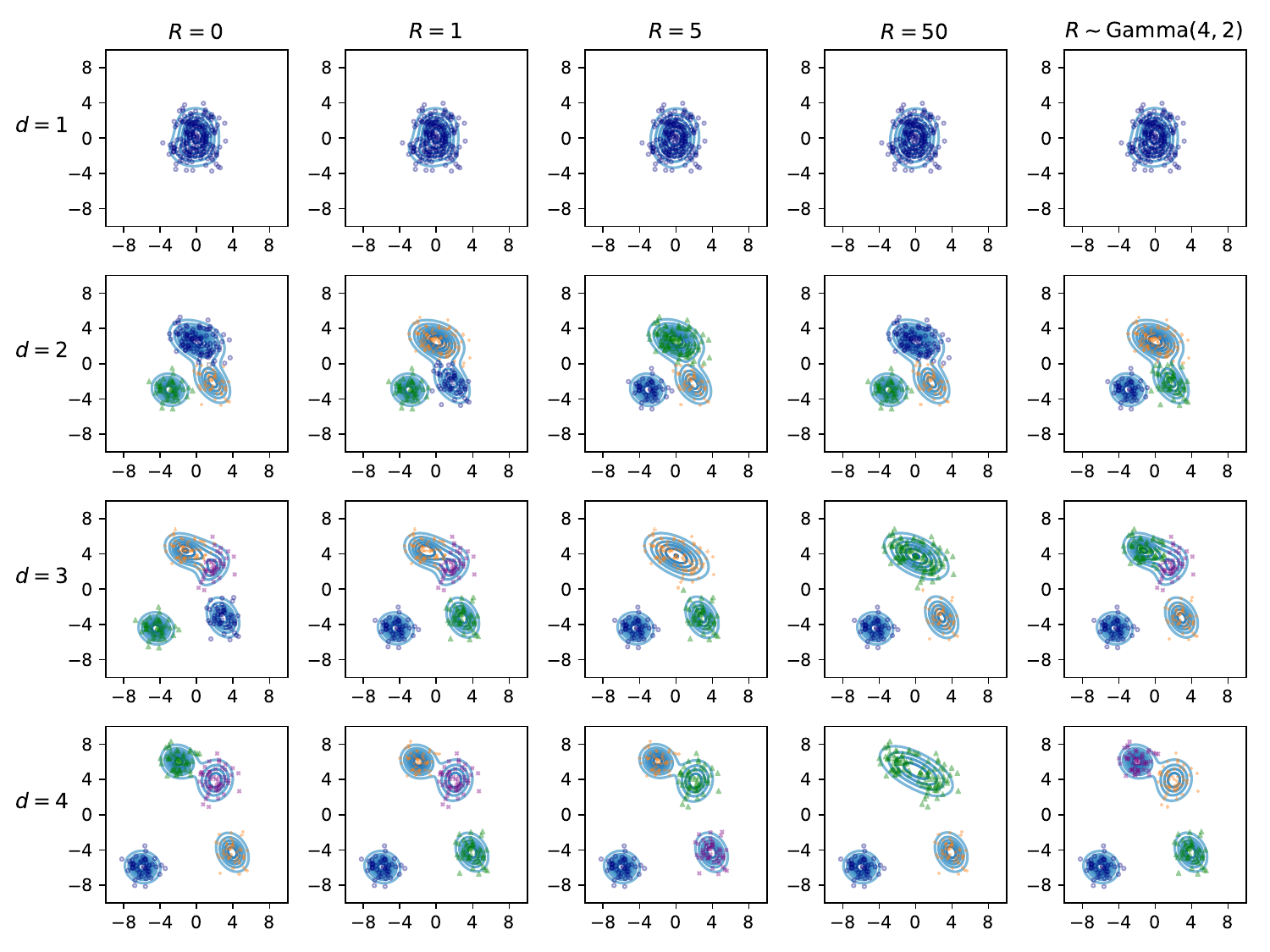}
	\caption{
		Probabilistic MRMM
	}
	\label{fig:syn-soft-all}
\end{figure}

\begin{figure}
	\centering
	\includegraphics[width=\linewidth]{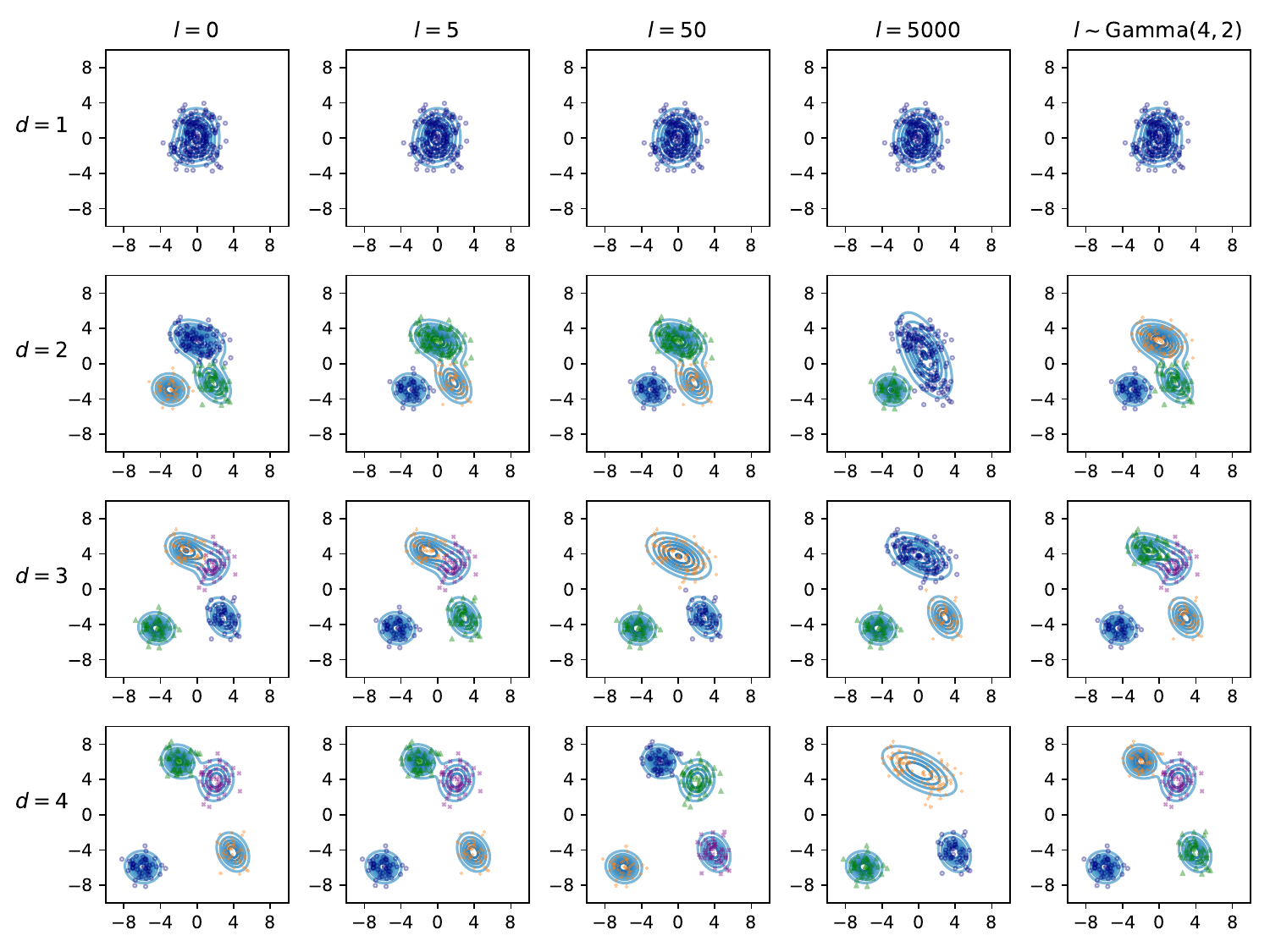}
	\caption{
		Squared-exponential MRMM
	}
\label{fig:syn-rbf-all}
\end{figure}

\begin{figure}
	\centering
	\includegraphics[width=\linewidth]{img_exp_parm_sum_EZ}
	\caption{
		Posterior mean of the number of clusters $\EC$. 
	}
	\label{fig:syn-sum-EC}
\end{figure}

\begin{figure}
	\centering
	\includegraphics[width=\linewidth]{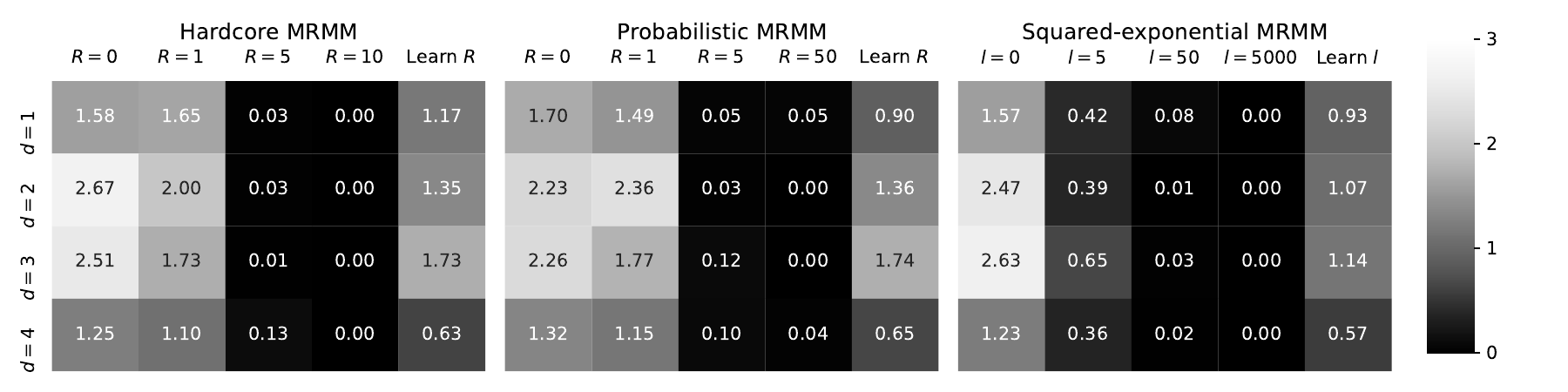}
	\caption{
		Posterior variance of the number of clusters $\VarC$.
	}
	\label{fig:syn-sum-VarC}
\end{figure}

\begin{figure}
	\centering
	\includegraphics[width=\linewidth]{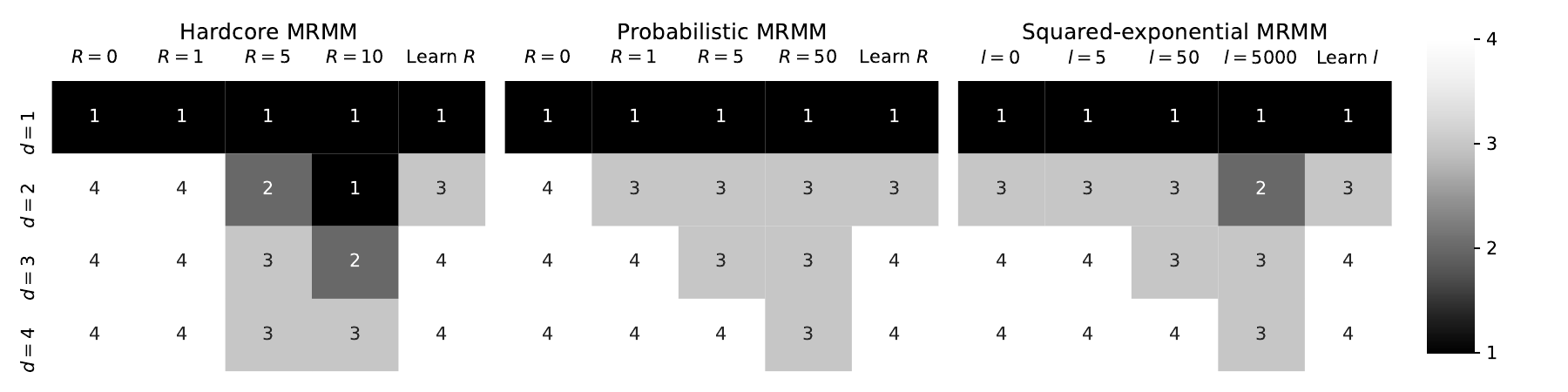}
	\caption{
		The number of clusters estimated from minimizing the posterior expectation of Binder’s loss function under equal misclassification costs, $\hC$.
	}
	\label{fig:syn-sum-hC}
\end{figure}

\begin{figure}
	\centering
	\includegraphics[width=\linewidth]{img_exp_parm_sum_lnpTest_diff}
	\caption{
		The difference between posterior testing likelihood and the testing likelihood under the ground truth model $M_0$, i.e. $\lnpTest - \ln \pgiven{\XX_{\text{test}}}{M_0}$.
	}
	\label{fig:syn-sum-lnpTest-diff}
\end{figure}

\begin{figure}
\centering
\includegraphics[width=\linewidth]{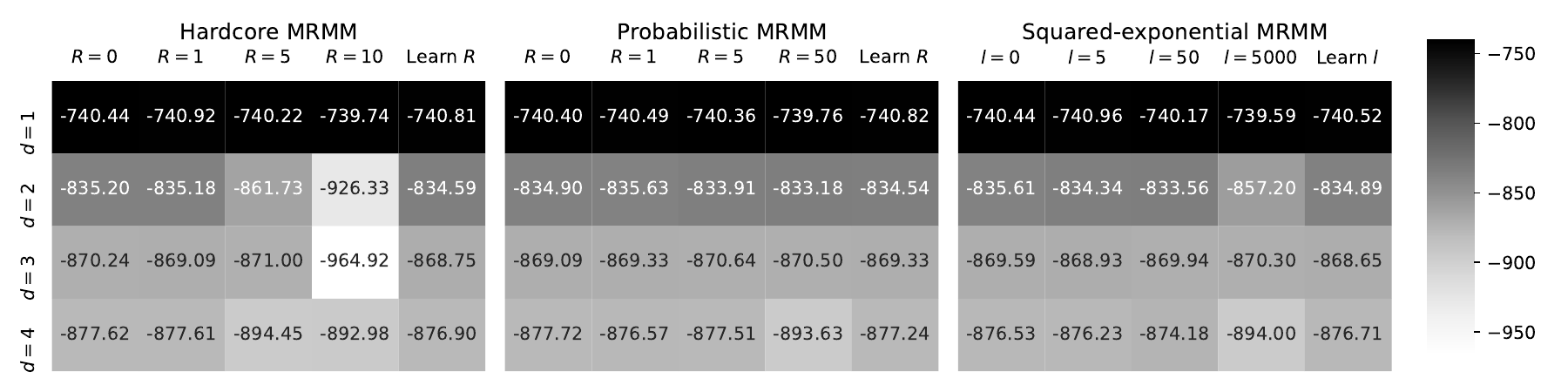}
\caption{
	The estimated log pseudo-marginal likelihood (LPML). 
}
\label{fig:syn-sum-LPML}
\end{figure}

\subsubsection{Setting thinning parameters via empirical Bayes}
\label{sec:eb2}

We consider draws from the following mixture of two components:
\\
\renewcommand{\arraystretch}{0.6}
\setlength{\arraycolsep}{3pt}
$
 \hspace*{1.6in}   y_{1},...,y_{n} \stackrel{iid}{\sim} 0.25N(-1,1) + 0.75SN(1,1,2),
$\\
\renewcommand{\arraystretch}{1}
\setlength{\arraycolsep}{5pt}
where $N(\mu,\sigma^{2})$ denotes the density of an univariate normal distribution with mean $\mu$ and variance $\sigma^{2}$. In addition, $SN(\xi,\omega,\alpha)$ denotes the density of an univariate skew normal distribution with location parameter $\xi$, scale parameter $\omega$, and shape parameter $\alpha$. 

We simulated a training dataset of size 600 and a test data with 300 observations were simulated independently from this. We model the dataset as a MRMM of univariate Gaussian kernel, with prior $p_{\Theta}(\theta)$ a Gaussian with mean zero and standard deviation 10 on the component locations, and an inverse-Gamma(1,1) prior on the variance of each mixture component.

The top-middle panel of Figure \ref{fig_EB_results} presents the kernel density estimate of pairwise distances. We observe that the distribution is unimodal with no clear local minimum. The estimated thinning radius, $\hat{\eta} = 7.342$, is notably large compared to the true between-cluster distance. The top-right panel of Figure \ref{fig_EB_results} displays the minimum of pairwise distance between cluster centers, $d_{\min,c}$. We observe the sharpest decrease in $d_{\min,c}$ occurs from $c=2$ to $c=3$, implying the emergence of a redundant cluster when fitting a mixture model with $c\geq 3$. The estimated thinning radius in this case is $\hat{\eta} = 1.960$, which is lower than the true between-cluster distance.

\begin{figure}[tbh!]
\begin{minipage}{0.6\textwidth}
	\centering
	\includegraphics[width=\textwidth]{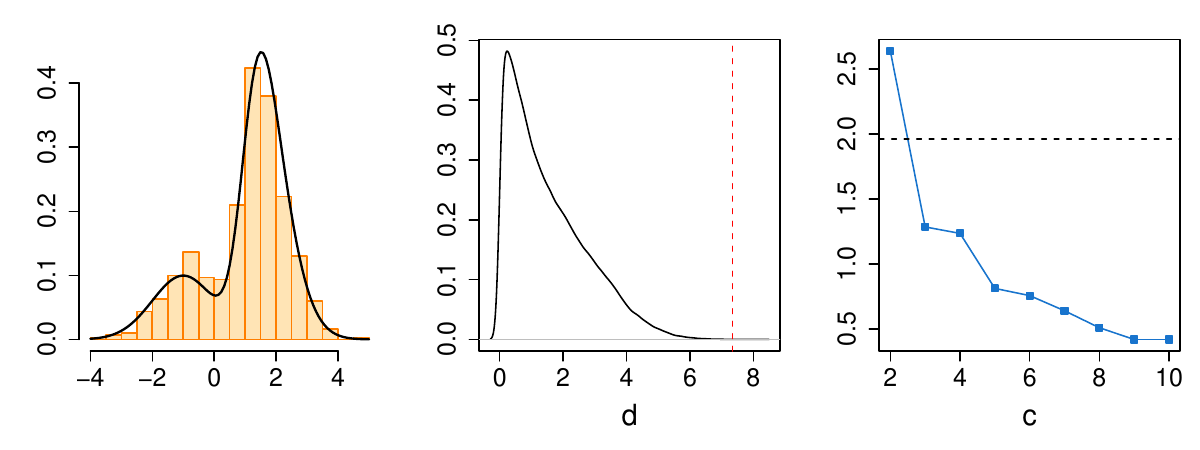}
    \includegraphics[width=\textwidth]{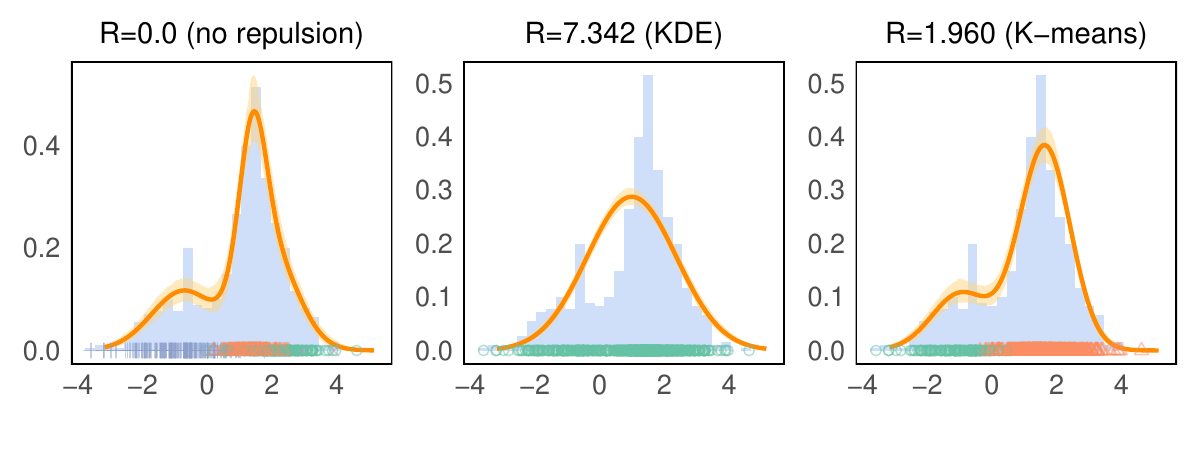}
    \end{minipage}
\begin{minipage}{0.38\textwidth}
	\caption{
	Top: (Left) Scatterplot of data with true mixture density. (Middle) Kernel density estimate of pairwise distances (Right)  $d_{\min,k}$ versus $k$.
    Bottom: Contour plot and cluster assignments of the univariate data for hardcore MRMM.}
	\label{fig_EB_results}
\end{minipage}
\end{figure}

The results for the hardcore MRMM are presented in Table \ref{table_hardcore} and in bottom panel of Figure \ref{fig_EB_results}. The clustering obtained using the approach of \cite{beraha2022mcmc} fails to identify two mixture components, collapsing all observations into a single cluster. In contrast, the $k$-means-based approach induces a moderate level of repulsion, so that it closely approximates the true number of clusters while not too much sacrificing the predictive performance compared to the non-repulsion scenario.

\begin{table}[!h]
\centering
\begin{tabular}{c|ccc|cc}
\hline
 Repulsion strength & $\EC$    & $\VarC$ & $\hC$ & $\lnpTest$ &	LPML\\
\hline\hline
\cellcolor{gray!0}{$R=0.0$ (no repulsion)} & \cellcolor{gray!0}{3.95} & \cellcolor{gray!0}{1.2311} & \cellcolor{gray!0}{4} & \cellcolor{gray!0}{-482.38} & \cellcolor{gray!0}{-984.73}\\\hline
$R=7.342$ (KDE-based) & 1.00 & 0.0000 & 1 & -529.08 & -1048.62\\\hline
\cellcolor{gray!0}{$R=1.960$ (K-means-based)} & \cellcolor{gray!0}{2.12} & \cellcolor{gray!0}{0.1220} & \cellcolor{gray!0}{2} & \cellcolor{gray!0}{-483.90} & \cellcolor{gray!0}{-993.93}\\
\hline
\end{tabular}
\caption{Posterior summaries of hardcore MRMM on the univariate dataset.}
\label{table_hardcore}
\end{table}

\subsection{Probabilistic MRMM on real datasets}
In this section, we report probabilistic MRMM results on real datasets. The model and parameter settings are the same with the hardcore MRMM reported in the paper, except for the thinning probability, which is fixed to 0.95 in all experiments below.  

\paragraph{Chicago 2019 homicide data} Results are shown in \Cref{fig:crime-soft} and \Cref{tbl:crime-soft}.

\begin{figure}[H]
	\includegraphics[width=\linewidth]{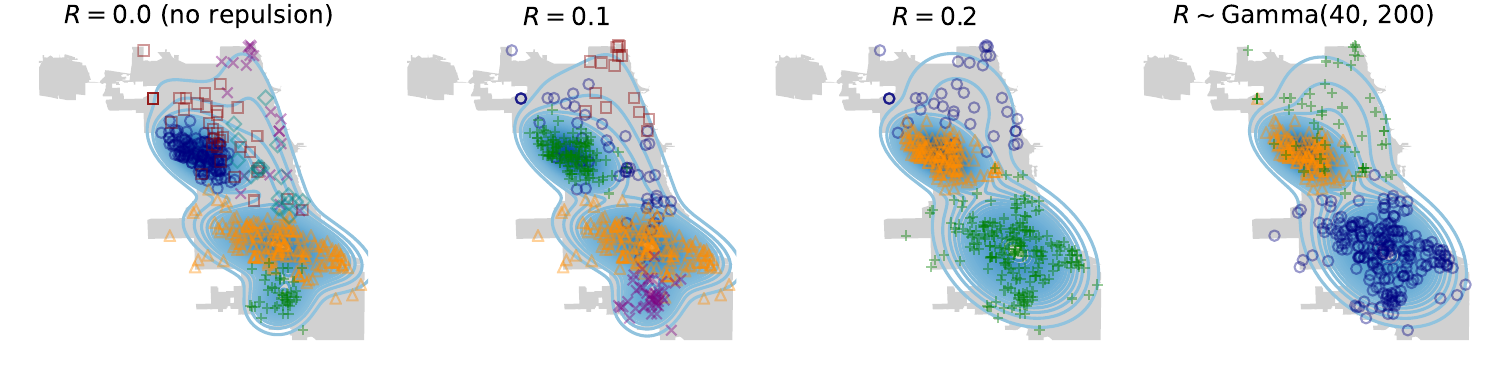}
	\caption{
		Contour plot and clustering of Chicago crime data from probabilistic MRMM. }
	\label{fig:crime-soft}
\end{figure}

\begin{table*}
	\centering
	\begin{tabular}{c|ccc|cc}
		\hline
		Repulsion strength      & $\EC$    & $\VarC$ & $\hC$ & $\lnpTest$ &	LPML	 \\ \hline\hline
		$R=0.0$ (no repulsion) &    5.26 &    0.3914 &       6 &   252.20 &  1351.28 \\\hline
		$R=0.1$ &    4.39 &    0.2703 &       5 &   251.21 &  1341.60 \\\hline
		$R=0.2$ &    3.00 &    0.0000 &       3 &   247.43 &  1321.11 \\\hline
		$R\sim\dGamma{40, 200}$ &    3.00 &    0.0040 &       3 &   246.73 &  1324.53 \\\hline
	\end{tabular}
	\caption{
		Posterior summaries of probabilistic MRMM on Chicago crime dataset. 
		Inferring the thinning radius yields the posterior mean and variance $\EE\sgiven R\XX = 0.15$, $\Var\rgiven R{\XX} = 0.0001$.}
	\label{tbl:crime-soft}
\end{table*}

\paragraph{Protein structural data} Results are shown in \Cref{fig:protein-soft} and \Cref{tbl:protein-soft}.

\begin{figure}[H]
	\includegraphics[width=\linewidth]{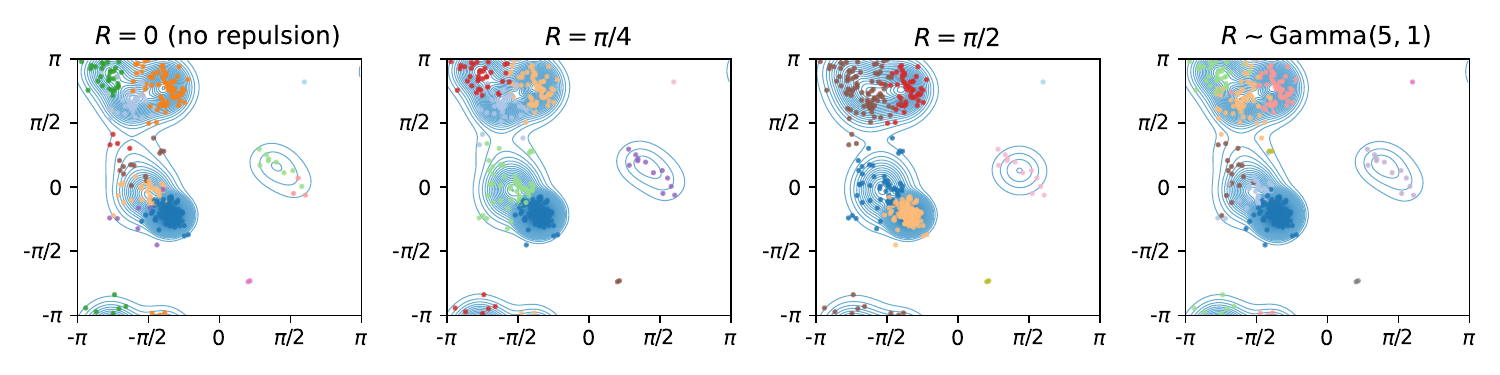}
	\caption{
		Contour plot and clustering of the protein data from probabilistic MRMM.}
	\label{fig:protein-soft}
\end{figure}

\begin{table*}
	\centering
	\begin{tabular}{c|ccc|cc}
		\hline
		Repulsion strength      & $\EC$    & $\VarC$ & $\hC$ & $\lnpTest$ &	LPML	 \\ \hline\hline
		$R=0$ (no repulsion) &   12.15 &    3.5369 &      13 &  -142.15 & -610.94 \\\hline
		$R=\pi/4$ &   10.14 &    1.5298 &       9 &  -142.36 & -618.13 \\\hline
		$R=\pi/2$ &    7.37 &    0.4056 &       7 &  -145.14 & -625.13 \\\hline
		$R\sim\dGamma{5, 1}$ &   10.85 &    1.9969 &      11 &  -143.64 & -622.55 \\\hline
	\end{tabular}
	\caption{
		Posterior summaries of probabilistic MRMM on the protein dataset. 
		Inferring the thinning radius yields the posterior mean and variance $\EE\sgiven R\XX = 0.18\pi$, $\Var\rgiven R{\XX} = 0.0017\pi^2$.
	}
	\label{tbl:protein-soft}
\end{table*}

\paragraph{Comparison with \citet{xie2019bayesian} on the Old Faithful dataset} Results are shown in \Cref{fig:faithful-soft} and \Cref{tbl:faithful-soft}.

\begin{figure}[H]
	\includegraphics[width=\linewidth]{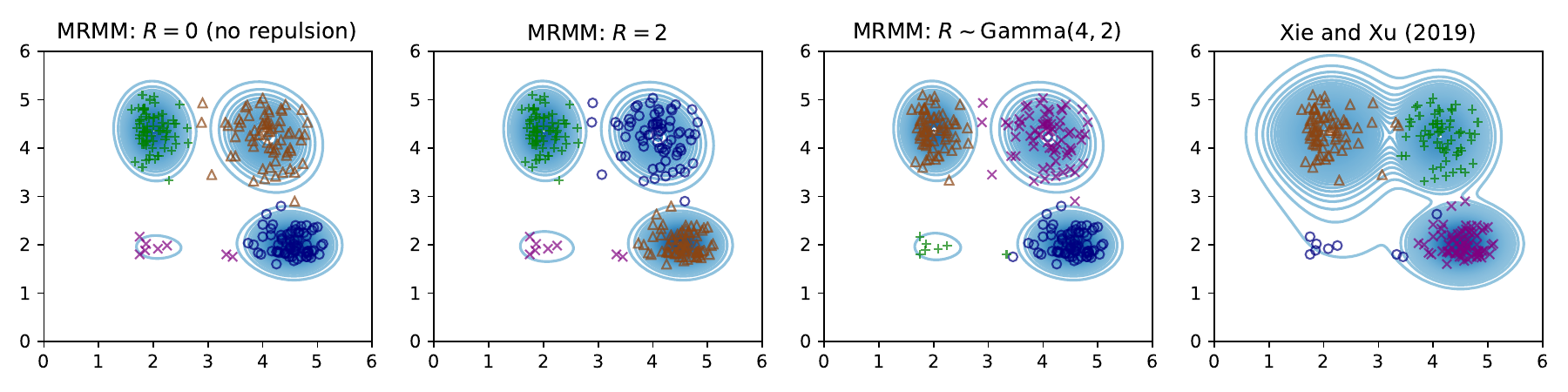}
	\caption{
		Contour plot and clustering of the Old Faithful geyser eruption data from probabilistic MRMM. }
	\label{fig:faithful-soft}
\end{figure}

\begin{table*}
	\centering
	\small
	\begin{tabular}{l|ccc|cc|c|c}
		\hline
		Model                & $\EC$ \hspace{-.2in}& $\VarC$ \hspace{-.2in} & $\hC$ & $\lnpTest$ \hspace{-.2in} &  LPML &  Runtime(s) & ESS/s   \\ \hline\hline
		\citet{xie2019bayesian} &    3.71 &    0.2116 &       4 &  -104.32 & -464.22 &  225.6 & 0.01 \\\hline
		MRMM &     &    &    &   &  &   &  \\\hline
		$R=0$ (no repulsion) &    4.02 &    0.0157 &       4 &   -95.83 & -420.53 &  257.8 & 14.31 \\\hline
		$R=2$ &    4.00 &    0.0000 &       4 &   -95.93 & -419.85 &  297.6 & 0.38 \\\hline
		$R\sim\dGamma{4, 2}$ &    4.01 &    0.0138 &       4 &   -95.96 & -420.94 &  287.0 & 2.66 \\\hline
	\end{tabular}
	\caption{
		Posterior summaries of probabilistic MRMM on the Old Faithful geyser eruption data. 
		Inferring the thinning radius yields the posterior mean and variance $\EE\sgiven R\XX = 1.39$, $\Var\rgiven R{\XX} = 0.1540$.}
	\label{tbl:faithful-soft}
\end{table*}

\paragraph{Comparison with \citet{bianchini2018determinantal} on the Galaxy dataset} Results are shown in \Cref{fig:galaxy-soft} and \Cref{tbl:galaxy-soft}.

\begin{figure}[H]
	\includegraphics[width=\linewidth]{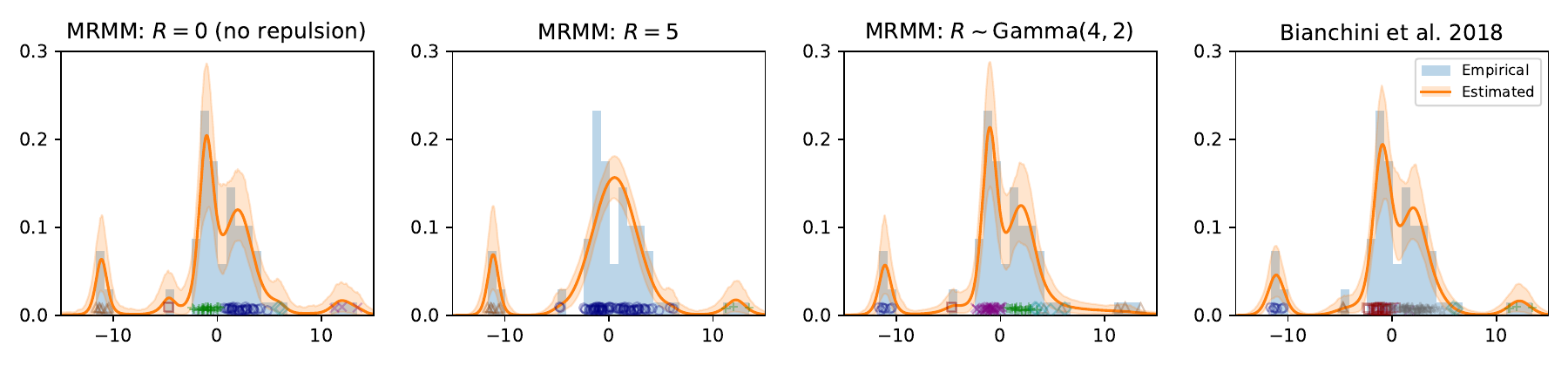}
	\caption{
		Contour plot and clustering of the Galaxy data from probabilistic MRMM.}
	\label{fig:galaxy-soft}
\end{figure}

\begin{table*}
	\centering
	\begin{tabular}{l|ccc|c|c|c}
		\hline
		Model & $\EC$  & $\VarC$  & $\hC$ &  LPML & Runtime (s) & ESS/s   \\ \hline\hline
		\citet{bianchini2018determinantal} &    6.00 &    1.2180 &       7 & -207.94 &   600.4 & 0.02 \\\hline
		MRMM: & & & & & & \\\hline
		$R=0$ (no repulsion) &    7.53 &    4.2370 &       6 & -209.66 &  734.4 & 46.9 \\\hline
		$R=5$ &    3.47 &    0.3772 &       3 & -212.36 &  410.4 & 172.4\\\hline
		$R\sim\dGamma{4, 2}$ &    6.23 &    1.8120 &       6 & -209.43 &  498.2 &13.0\\\hline
	\end{tabular}
	\caption{
		Posterior summaries of probabilistic MRMM on the Old Faithful geyser eruption data. 
		Inferring the thinning radius yields the posterior mean and variance $\EE\sgiven R\XX = 1.87$, $\Var\rgiven R\XX = 0.3228$.}
	\label{tbl:galaxy-soft}
\end{table*}

\end{document}